\documentclass[%
 reprint,
superscriptaddress,
 amsmath,amssymb,
 aps,
 prx,
 floatfix,
]{revtex4-2}
\PassOptionsToPackage{hyphens}{url}
\usepackage[colorlinks=true,allcolors=black]{hyperref}
\usepackage{amsmath,amssymb,amsfonts,bm}
\usepackage{graphicx}
\usepackage{booktabs}
\usepackage{braket}
\usepackage{mathtools}
\usepackage{array}
\usepackage{xcolor}
\usepackage{amsthm}
\theoremstyle{plain}
\newtheorem{result}{Result}
\newtheorem{theorem}{Theorem}
\newtheorem{lemma}[theorem]{Lemma}
\newtheorem{proposition}[theorem]{Proposition}

\theoremstyle{definition}
\newtheorem{definition}{Definition}
\newtheorem{remark}[theorem]{Remark}
\newcommand{\nq}{n_q}

\begin{document}

\title{Practical Quantum Advantage before Fault Tolerance via Quantum-Informed \\ Machine Learning}

\author{Maida Wang}
\affiliation{Centre for Computational Science, University College London, London, UK}

\author{Xiao Xue}
\affiliation{Centre for Computational Science, University College London, London, UK}

\author{Minh Chung}
\affiliation{Leibniz Supercomputing Centre of the Bavarian Academy of Sciences and Humanities, Boltzmannstrasse 1, 85748 Garching, Germany}
\author{Peter V. Coveney}
\email{p.v.coveney@ucl.ac.uk}
\affiliation{Centre for Computational Science, University College London, London, UK}
\affiliation{Centre for Advanced Research Computing, \mbox{University College London, London, UK}}

\begin{abstract}
Early quantum devices can deliver a practical advantage before general-purpose fault tolerance. The role we identify is a special-purpose statistical module inside a classical scientific workflow: a compressed memory with a collective two-copy read-out, evaluated against a checkable definition of practical quantum advantage. We develop this mechanism in quantum-informed machine learning for chaotic dynamical systems. A family of $k$-indexed higher-order quantum statistical priors (Q-Priors) hosts the $k$-point marginal of the invariant measure on $\nq = kq$ qubits, extending the single-site construction of prior work. We prove a two-stage advantage. In the representation stage, superposition and entanglement compactly store non-factorisable spatial correlations of the invariant measure on $\nq$ qubits. In the extraction stage, joint Bell measurements on two copies estimate any \emph{post hoc} Pauli functional with a copy-pair count independent of $\nq$, whereas any adaptive single-copy protocol for the corresponding full-Pauli read-out requires $\Omega(2^{\nq})$ copies; this is a provable quantum-classical separation in copy-measurement complexity. The two-copy read-out is realised in simulation and on IQM superconducting processors. Two case studies instantiate the mechanism in workflows of independent scientific value. In a turbulent channel-flow study, the two-copy read-out yields the velocity-direction coherence as a named non-diagonal correlator of the invariant measure, and the multi-site $k = 2$ Q-Prior recovers DNS-level invariant-measure statistics that the unregularised baseline loses. In a medium-range weather forecasting workflow on the European Centre for Medium-Range Weather Forecasts ERA5 reanalysis, the diagonal $k \leq 2$ Q-Prior steers a Koopman rollout, improves anomaly correlation skill by 10 to 39\% across 48 to 240\,h lead times, and stabilises long-horizon rollouts against collapse onto a static mean field. Together, the mechanism and these two case studies satisfy our practical-advantage definition, identifying a candidate route to practical quantum advantage before fault-tolerant hardware.
\end{abstract}

\maketitle

\section{Introduction}
Roadmaps to general-purpose fault-tolerant quantum computing remain uncertain, and the transition will be gradual~\cite{Preskill_2018,katabarwa2024early}: early error-corrected machines are not expected to differ greatly in usable scale from present noisy devices~\cite{googlequantum2025below}, so waiting for a decisive hardware break is not a strategy for scientific applications today~\cite{kim2023evidence}.

In order to address this issue, a number of approaches to pre-fault-tolerance advantage have been explored. However, many of the quantum speedups that have been reported have since been matched or surpassed by improved classical algorithms. This suggests that the apparent gap is indicative of classical algorithmic immaturity rather than a fundamental quantum resource~\cite{huang2026vast}. Establishing a genuine advantage therefore demands both rigorous proof and experimental verification; this paper proceeds on that basis. 

The clearest experimental demonstrations of quantum computational advantage to date, random-circuit and Gaussian boson sampling, establish provable quantum--classical separations on distributions constructed to be classically hard, rather than on scientific problems of independent interest~\cite{arute2019supremacy,zhong2020jiuzhang,liu2026jiuzhang4}; the remaining established routes are asymptotic~\cite{shor1997polynomial,grover1996fast,lloyd1996universal,harrow2009quantum}. Neither tells a working scientist what a present device is for.
The operative question is what role an early quantum device can play inside a real scientific workflow on classical data, where input loading, measurement read-out and strong classical baselines can each independently destroy a speedup~\cite{aaronson2015read,tang2019dequantization,holevo1973bounds,liu2021rigorous,cerezo2022challenges,biamonte2017quantum,lecun2015deep,jumper2021alphafold}.

The role we propose is a special-purpose statistical module: a compressed representation of a domain-specific invariant statistic, trained once on classical scientific data and queried \emph{post hoc} for Pauli functionals revealed only at read-out time, whose provable resource is the collective two-copy Bell measurement. Quantum-informed machine learning (QIML), introduced in~\cite{wang2026qiml} and extended here, is the architecture that realises this role. Chaotic dynamical systems with a well-defined invariant measure~\cite{eckmann1985,schiff2024dyslim} are the natural first setting: unlike many generative targets for classical data~\cite{bengio2021gflownet,stokes2020antibiotic}, their invariant measure is a canonical statistical object in dynamical systems~\cite{eckmann1985,petersen1989ergodic,budivsic2012applied}, and long-horizon rollout stability of classical predictors is governed by statistical fidelity to it.
To make the advantage of that role checkable rather than rhetorical, we adopt the following working definition.

\begin{definition}[Practical quantum advantage]
\label{def:pqa}
A practical quantum advantage in scientific machine learning holds when two conditions are met jointly: (i) a provable quantum--classical separation in a task-relevant computational resource, established for a general class of problems; and (ii) an empirical instantiation of the resulting advantage mechanism within a workflow that addresses a problem of independent scientific value.
\end{definition}

Every subsequent result of this work is assigned to a clause of this definition. Result~\ref{res:representation} establishes the representation stage: a $k$-indexed family of Q-Priors compactly stores the invariant measure's $k$-point marginal on $\nq = kq$ qubits; at the case-study scale $\nq = 10$, the trained generator carries $170$ circuit parameters against the $1{,}023$ explicit outcome probabilities of the tabulated marginal. Result~\ref{res:readout} establishes the read-out stage as the module's provable resource: joint Bell measurements on two copies of any trained state estimate an arbitrary \emph{post hoc} Pauli functional with $M_Q = O(\eta^{-4} \log(1/\delta))$ copy pairs independent of $\nq$, whereas the corresponding adaptive single-copy protocol on the full-Pauli task requires $M_C = \Omega(2^{\nq})$; measured, $M_Q = 800$ copy pairs stand against a single-copy cost of $1.5 \times 10^6$ at $\nq = 10$, and the ratio reaches $M_C / M_Q \approx 1.3 \times 10^6$ at $\nq = 16$. The two-copy Bell read-out is executed on IQM Garnet and Emerald superconducting processors at $\nq$ up to 16 per copy (Fig.~\ref{fig:measure}), establishing that the module is implementable on present hardware. Two case studies then instantiate the criterion inside workflows of independent scientific value. Case~1 (turbulent channel flow) anchors the physical meaning of the non-diagonal read-out: the two-copy Bell estimate equals the velocity-direction coherence, a named correlator of the turbulent invariant measure, and a three-arm ablation of the multi-site $k = 1 + 2$ structure isolates the downstream contribution of the pairwise sector. Case~2 (ERA5 medium-range weather forecasting) instantiates the module inside an operational forecast workflow: the diagonal $k \leq 2$ Q-Prior steers a Koopman rollout and reduces long-horizon collapse against ERA5 climatology. Case~1, the read-out demonstration of Fig.~\ref{fig:measure}, and Case~2 are complementary instantiations of Definition~\ref{def:pqa}: Case~1 fixes meaning, Fig.~\ref{fig:measure} fixes scaling, and Case~2 fixes workflow value.

\begin{figure*}[t]
\centering
\includegraphics[width=\textwidth]{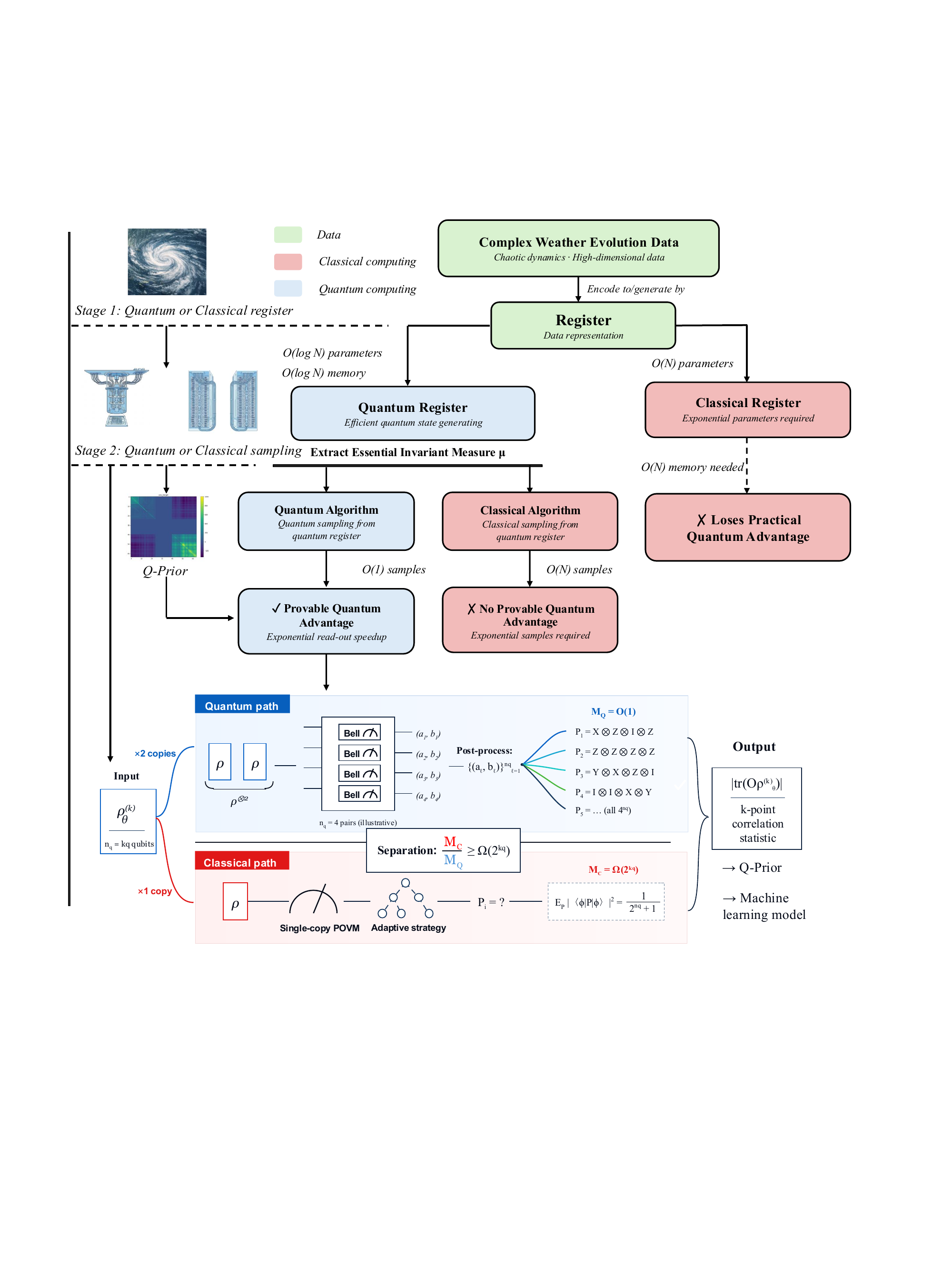}
\caption{Two-stage quantum advantage architecture for QIML. Stage~1 is representation: a compact quantum register stores the Q-Prior (Appendix~\ref{sec:setup}, generalising~\cite{wang2026qiml}), whereas an explicit classical register or probability table loses the compression advantage. Stage~2 is extraction/read-out: Bell measurements on two copies of $\rho^{(k)}_\theta$ estimate \emph{post hoc} Pauli statistics with $M_Q$ independent of $\nq$ at fixed accuracy and confidence. Adaptive single-copy read-out for the corresponding full-Pauli task requires $M_C=\Omega(2^{\nq})$. The output is the numerical Q-Prior statistic used by the downstream classical model. The entire architecture is thus implemented as QIML~\cite{wang2026qiml}.}
\label{fig:advantage}
\end{figure*}

\section{Q-Prior mechanism and two-stage advantage}
This section builds the statistical module announced in the Introduction. A Q-Prior is a sample-based quantum statistical prior produced by a parametrised generator and intended to encode the low-order statistics of the invariant measure of a classical chaotic dynamical system. We treat the Q-Prior as a parametrised family indexed by the order $k$ and the Pauli class extracted at read-out, with the construction of Wang et al.~\cite{wang2026qiml} corresponding to the diagonal $k = 1$ (single-site) member. The device learns the prior once, and the same trained state serves reusable marginal and correlation statistics to the downstream classical predictor, following the hybrid quantum--classical division of labour in which parametrised quantum subroutines sit inside larger classical optimisation loops~\cite{bickley2025extending,tilly2022variational,kandala2017hardware}; training once and measuring a reusable statistical object avoids repeated loading of full fields during downstream inference, addressing the loading and read-out obstacles of the Introduction together rather than separately. Three results establish the mechanism: Result~\ref{res:representation} bounds the cost of representing the $k$-point invariant-measure marginal; Result~\ref{res:readout} establishes the quantum--classical separation in copy-measurement complexity for the \emph{post hoc} full-Pauli read-out task, stated for the general Pauli class; and Result~\ref{res:memory} combines them into the Q-Prior statistical-memory advantage.

For $k$ spatial locations modelled jointly by the Q-Prior, each discretised into $B = 2^q$ bins, the generator acts on $\nq = kq$ qubits and prepares a pure state
\begin{equation}
\rho^{(k)}_\theta = U(\theta)|0^{\nq}\rangle\langle 0^{\nq}| U^\dagger(\theta), \qquad p_\theta(s) = |\langle s | U(\theta) | 0^{\nq}\rangle|^2 .
\label{eq:generator_state}
\end{equation}
Here $p_\theta(s)$ is the Born-rule probability of computational-basis outcome $s$; training adjusts $\theta$ from finite data so that chosen low-order marginals or moments of $\rho^{(k)}_\theta$ track targets. The learned object is this fitted generator state and its empirically constrained statistics, not access to the full $p_\theta$ over all $2^{\nq}$ outcomes. For projectors $\Pi_{i_l, b_l}$ ($l = 1, \ldots, k$) selecting bin $b_l$ at location $i_l$, the corresponding marginal is
\begin{equation}
\mathcal{P}^{(k)}_\theta(b_1, \ldots, b_k) = \mathrm{tr}\!\left[\left(\bigotimes_{l=1}^{k} \Pi_{i_l, b_l}\right) \rho^{(k)}_\theta\right].
\label{eq:marginal}
\end{equation}

\paragraph*{Q-Prior as a quantum-compressed invariant measure.}
A Q-Prior $p_\theta$ is most naturally read as a parametrised approximation of a low-order marginal of the physical invariant measure $\mu$ of the underlying chaotic dynamical system~\cite{eckmann1985}. The $k$-indexed family extends the single-site estimator of Wang et al.~\cite{wang2026qiml} to multi-site joint marginals of $\mu$, accessing spatial correlations of $\mu$ that the single-site case does not resolve. For chaotic rollouts, pointwise predictability decays on the Lyapunov time~\cite{eckmann1985}, but the rollout distribution converges to $\mu$ in the ergodic limit; statistical fidelity to $\mu$ therefore remains a meaningful long-horizon target~\cite{petersen1989ergodic,budivsic2012applied}. In ergodic-theory terms this long-horizon target is the physical, Sinai--Ruelle--Bowen-type invariant measure that governs time-averaged observables of dissipative chaotic flows~\cite{young2002srb}; a model that preserves its low-order statistics therefore retains statistical fidelity to the attractor even once individual trajectories have decorrelated on the Lyapunov time. The Koopman autoregressive component handles the operator side of the dynamics, while the Q-Prior supplies the invariant-measure side, the fixed point of the dual Perron--Frobenius action. Superposition supplies an exponentially large Hilbert space for $\rho^{(k)}_\theta$, so the induced weights $p_\theta(s)$ can have extensive support; entangling gates make those weights non-factorisable, meaning they cannot be written as a product of single-site distributions, allowing spatially extended correlations of $\mu$ to be stored in the state~\cite{benedetti2019generative}. The extracted statistical functionals enter a classical loss following the standard differentiable-programming logic of optimising constraints through a downstream model~\cite{liu2018differentiable}.

The weather case study uses the $k \leq 2$ members of this family through a covariance regulariser,
\begin{equation}
\mathcal{L}_{\mathrm{tot}} = \mathcal{L}_{\mathrm{rec}} + \lambda \, \| \widehat{\Sigma}_t - \Sigma_Q \|_F^2 .
\label{eq:loss}
\end{equation}
where $\widehat{\Sigma}_t$ is the empirical covariance of the rollout state over a sliding window ending at forecast time $t$.
Equation~\eqref{eq:loss} enforces the dynamics-measure duality at the level of second moments: $\Sigma_Q$ is a quantum-compressed approximation to the covariance of $\mu$ computed from Q-Prior marginals, and the regulariser penalises drift of the rollout away from this target. Higher-order $k \geq 3$ Q-Priors access higher-order statistics of $\mu$ within the same protocol.

The following result captures the representation stage.

\begin{result}[Compact Q-Prior representation]
\label{res:representation}
Consider the $k$-point marginal of the invariant measure $\mu$, with each site discretised into $B = 2^q$ bins. An explicit classical table of the full joint distribution requires
\[
N_C = B^k - 1 = 2^{kq} - 1
\]
independent parameters, while a fully factorised single-site product model requires
\[
N_{\mathrm{prod}} = k(B-1) = k(2^q - 1)
\]
parameters and carries no inter-site correlation. If this $k$-point marginal is preparable to fixed accuracy by a polynomial-size circuit on $\nq = kq$ qubits, then a Q-Prior hosts the corresponding non-factorisable joint distribution with
\[
N_Q = \mathrm{poly}(\nq)
\]
trainable parameters.
\end{result}

Entanglement across the site partition is the physical resource that stores the non-factorisable correlations; the preparability condition and the parameter counts are formalised in Appendix~\ref{sec:setup}. The counts are measured, not merely asymptotic: at the case-study scale ($k = 2$, $q = 5$, $\nq = 10$), the trained generator uses $170$ parameters against the $1{,}023$ independent probabilities of the explicit table (Appendix~\ref{sec:numdemo}).

We formalise the extraction/read-out stage as a \emph{post hoc} Pauli task. A trained Q-Prior state $\rho^{(k)}_\theta$ on $\nq=kq$ qubits is available in $M$ identical copies ($M_Q$ counts Bell copy pairs, $M_C$ single copies; $t$ is reserved for the forecast lead time in the weather sections). $M_C$ denotes the classical single-copy cost, provably $\Omega(2^{\nq})$ for the \emph{post hoc} full-Pauli task and realised at $3^{\nq}/\eta^2$ by classical shadows. Measurements are performed before the query is revealed. After this learning phase, a Pauli observable $P\in\{I,X,Y,Z\}^{\otimes \nq}$ is revealed, and the algorithm estimates $|\mathrm{tr}(P\rho^{(k)}_\theta)|$ to additive accuracy $\eta$ with failure probability at most $\delta$. This captures the reusable nature of the Q-Prior: the same trained state should support statistical constraints that are chosen after training.

\begin{result}[\emph{post hoc} Pauli read-out]
\label{res:readout}
For the \emph{post hoc} Pauli read-out task on an $\nq=kq$ qubit Q-Prior state, joint Bell measurements on two copies estimate $|\mathrm{tr}(P\rho)|$ for any Pauli $P$ using
\begin{equation}
M_Q=O\!\left(\eta^{-4}\log\frac{1}{\delta}\right),
\label{eq:MQ_scaling}
\end{equation}
copy pairs, independent of $\nq$ at fixed accuracy and confidence. Any adaptive single-copy protocol for the corresponding full-Pauli read-out task requires $M_C=\Omega(2^{\nq})$ copies.
\end{result}

The separation is stated in the measurement-access model: the learner receives identical copies of the trained Q-Prior state $\rho_\theta$ prepared on the device, holds bounded classical memory, and answers a Pauli query revealed after measurement; it is not given the circuit description $\theta$. In this model, joint two-copy (Bell) measurement estimates any $|\mathrm{tr}(P\rho)|$ with a copy count independent of $\nq$, whereas any adaptive single-copy strategy, one that measures one copy at a time and retains only classical memory between measurements, requires $\Omega(2^{\nq})$ copies. The separation is between measurement strategies on the physical state rather than a claim of classical computational hardness: an agent given $\theta$ can evaluate $\mathrm{tr}(P\rho)$ by classical simulation in time $O(2^{\nq})$, and an agent given the training data can estimate the same low-order statistics directly. The $\nq = 16$ hardware points are a scaling demonstration of the copy complexity.

We prove the matching upper and lower bounds by extending the measurement-tree framework developed for engineered state families~\cite{huang2022quantum} to states produced by a generative model used inside a scientific machine-learning workflow. The two-copy-against-single-copy measurement geometry is the primitive underlying the learning separations of Huang et al.~\cite{huang2022quantum}; the present contribution pairs it with a representation stage (Result~\ref{res:representation}) that has no analogue in the learning-from-unknown-states setting and applies the read-out to a trained prior whose marginals approximate the invariant measure of a chaotic flow. A separation about learning black-box states becomes a constructive, reusable read-out primitive for a data-derived scientific prior: train once, then estimate \emph{post hoc} any Pauli functional and feed it to the downstream classical predictor. The scientific test is whether the recovered statistic is the right physical object for stabilising long-horizon prediction.
The upper bound uses Bell measurements on $\rho^{\otimes 2}$. We pair corresponding qubits and measure each pair in the Bell basis. Each Bell outcome is a simultaneous eigenstate of $\sigma \otimes \sigma$ for $\sigma \in \{X, Y, Z\}$, with $I \otimes I$ acting trivially. Hence, for any Pauli string $P = \sigma_1 \otimes \cdots \otimes \sigma_{\nq}$, the product of the associated Bell eigenvalues gives an unbiased estimator of $[\mathrm{tr}(P\rho)]^2$. Hoeffding concentration applied to this bounded estimator gives $\eta^2$-accuracy in $[\mathrm{tr}(P\rho)]^2$ with $O(\eta^{-4} \log(1/\delta))$ copy pairs; the square-root conversion to $|\mathrm{tr}(P\rho)|$ accounts for the $\eta^{-4}$ rather than $\eta^{-2}$ scaling. The dependence on accuracy is explicit; the independence is from the number of Q-Prior qubits at fixed $\eta$ and $\delta$. Classical shadows~\cite{huang2020shadows} provide an alternative single-copy read-out scheme with $O(3^{\nq}/\eta^2)$ sample complexity for arbitrary Pauli observables, which remains $\nq$-dependent and does not close the separation; the Bell-measurement protocol above achieves the $\nq$-independent regime of Eq.~\eqref{eq:MQ_scaling} by accessing two copies jointly.
For the lower bound, we compare against adaptive single-copy protocols. The hard family is centred at $\rho_0 = I/2^{\nq}$ with alternatives $\rho^{(s, P)} = (I + s\alpha P)/2^{\nq}$, where $P$ is a uniformly random non-identity Pauli, $s = \pm 1$, and $\alpha \in (0,1)$ is a fixed bias amplitude. A successful \emph{post hoc} estimator distinguishes the null from the sign-mixed alternative. The Pauli--SWAP identity implies that any single-copy measurement has average squared sensitivity $1/(2^{\nq} + 1)$ to a uniformly random Pauli direction. The resulting total-variation (TV) bound forces the copy count to satisfy $M_C = \Omega(2^{\nq}) = \Omega(2^{kq})$. The explicit constants in this bound, the lower-bound construction, and adaptive measurement-tree proof are given in Appendices~\ref{sec:setup}, \ref{sec:upper}, and \ref{sec:lower}.

The mechanism behind this separation is geometric. The Bell basis is a maximally entangled measurement basis whose four projectors are simultaneous eigenstates of $\sigma \otimes \sigma$ for every Pauli operator $\sigma$. A single Bell measurement record on two copies therefore carries the eigenvalue information for every Pauli string simultaneously, and can be reused by classical post-processing to estimate any $|\mathrm{tr}(P\rho)|$ after the query is revealed. Single-copy protocols lack this property: each single-copy measurement commits to a direction before the query is known. The same $\sigma \otimes \sigma$ Bell geometry underlies Bell-correlation experiments~\cite{aspect1982}; here it is repurposed as a read-out primitive for Pauli statistics of a trained prior, not as a nonlocality test.
Results~\ref{res:representation} and~\ref{res:readout} combine into the central statement of the mechanism.

\begin{result}[Q-Prior statistical-memory advantage]
\label{res:memory}
A trained Q-Prior is a quantum statistical-memory module: $\nq = kq$ qubits compactly host an efficiently preparable approximation to the $k$-point invariant-measure marginal, and Bell read-out exposes Pauli functionals of that state. Against the classical table and the adaptive single-copy read-out, the storage and read-out costs obey
\begin{equation}
\frac{N_C}{N_Q} = \frac{2^{kq}-1}{\mathrm{poly}(kq)} = 2^{\Omega(\nq)}, \qquad \frac{M_C}{M_Q} = 2^{\Omega(\nq)},
\label{eq:memory_advantage}
\end{equation}
at fixed read-out accuracy and confidence.
\end{result}

The two ratios play different roles. The storage ratio $N_C/N_Q$ is benchmarked against explicit tabulation and the single-site product model; a compact classical generative model is not excluded at this stage, which is why the operative quantum--classical separation is the read-out bound of Result~\ref{res:readout}, holding for any trained state in the measurement-access model. The trained state then answers every \emph{post hoc} Pauli query at $\nq$-independent cost, with no further access to the training data or the circuit parameters.

\begin{figure*}[htbp]
\centering
\includegraphics[width=0.8\textwidth]{qiml_measure.pdf}
\caption{\emph{Post hoc} Bell read-out of trained Q-Priors in simulation and on superconducting hardware. (A) Cryostat and qubit-connectivity maps of the two IQM Quantum Computers superconducting processors used: 20-qubit Garnet ($\nq \leq 10$) and 54-qubit Emerald ($\nq = 12$ to $16$); highlighted sites host the $2\nq$-qubit two-copy register. (B) Stage-2 extraction for the ERA5-trained Q-Prior in noiseless simulation: Bell estimates against exact $|\mathrm{tr}(P\rho)|$ for 22 diagonal and non-diagonal Paulis at a single $n_q$, evaluated \emph{post hoc} from a single Bell record of $1.2\times 10^5$ copy pairs; shaded band $\pm\eta = 0.2$ (distinct from the per-$n_q$ 8-observable test set tabulated in Appendix~\ref{sec:numdemo}). (C) Stage-1 representation for the prior deployed in the hardware runs: the target ERA5 $Z_{500}$ two-site joint at streamwise displacement $r=(0,4)$ has Pearson correlation $+0.91$ and the $k = 2$ Q-Prior learned joint has correlation $+0.71$. The matched quantity is the scalar covariance target $\Sigma_Q$ entering the case-study regulariser of Eq.~\eqref{eq:loss}, which agrees with the data to within $0.4\%$ at $n_q = 6$ and $0.05\%$ at $n_q = 10$. (D) Copy/shot complexity of \emph{post hoc} Pauli read-out at $\eta = 0.2$, $\delta = 0.1$: single-copy classical-shadow cost $M_C \sim 3^{\nq}/\eta^2$ against Bell copy-pair cost $M_Q$ on trained Q-Priors; medians with interquartile bands across Pauli read-outs, in noiseless simulation (blue) and on IQM hardware (purple; $\nq \leq 10$ on Garnet, $\nq = 12$ to $16$ on Emerald). The simulation median ratio reaches $M_C/M_Q \approx 1.3\times 10^6$ at $\nq = 16$ (sim, ERA5; 32-qubit two-copy register on Emerald). The comparison line is the classical-shadow cost $3^{\nq}/\eta^2$; the proven task lower bound is $\Omega(2^{\nq})$, and the optimal single-copy cost lies between the two.}
\label{fig:measure}
\end{figure*}

\paragraph*{Hardware demonstration in the non-diagonal read-out regime.}
Result~\ref{res:readout} is stated for the general Pauli class; here we instantiate the \emph{post hoc} read-out task in that general, non-diagonal regime, directly on trained Q-Priors rather than on engineered benchmark states. The generator described in Appendix~\ref{sec:numdemo} is trained on paired ERA5 $Z_{500}$ samples at a fixed displacement under a pairwise-only objective, distinct from the forecast prior used in the ERA5 case study of Section~\ref{sec:case2}; the covariance of the learned joint is matched to the data, and the trained state is then read out by Bell measurements on two copies. Test observables include full-weight Paulis in $\{X,Y,Z\}^{\otimes \nq}$ together with low-weight cross-site terms, evaluated \emph{post hoc} from a single Bell record. We fix $\eta = 0.2$ and $\delta = 0.1$, with $\eta$ set above the median noise floor of $0.104$ measured for the ERA5 set in Appendix~\ref{sec:hardware}. The Bell copy-pair count $M_Q$ stays in the $10^3$ band across $\nq \in \{4, \ldots, 16\}$, with simulation medians of $500$ to $800$ across Pauli read-outs, consistent with the $\nq$-independence of Eq.~\eqref{eq:MQ_scaling}. The single-copy classical-shadow cost, by contrast, grows as $3^{\nq}/\eta^2$ and reaches $M_C \approx 1.1\times 10^9$ at $\nq = 16$, giving a median ratio $M_C/M_Q \approx 1.3\times 10^6$ in simulation (Fig.~\ref{fig:measure}D). The same protocol executed on IQM superconducting processors, Garnet (20 qubits) for $\nq \leq 10$ and Emerald (54 qubits) for $\nq = 12$ to $16$ (two-copy registers of up to 32 physical qubits), yields hardware copy-pair counts in the same flat band (medians $10^2$ to $1.6\times 10^3$); hardware accounting is given in Appendix~\ref{sec:hardware}. The demonstration establishes the $M_Q$ copy-complexity scaling of the Bell protocol directly on a trained, data-derived quantum state: the extracted non-diagonal expectations are genuine Pauli functionals of the ERA5-trained Q-Prior, carrying the off-diagonal structure of the learned joint distribution. The complementary turbulent channel-flow case (Section~\ref{sec:case1}) realises such a mapping concretely, with the same off-diagonal Bell observable equal to the velocity-direction coherence as a named physical correlator of the invariant measure; the corresponding mapping on ERA5-trained Q-Priors is enabled by an extended generator and read-out design.

\paragraph*{Scope and conditions.}
Two conditions on the classical system delimit where the framework applies. The extraction stage (Result~\ref{res:readout}) is general and makes no assumption about the underlying system; the representation stage (Result~\ref{res:representation}) needs two things. The condition is not on capacity: an $\nq$-qubit register can already represent amplitudes over $2^{\nq}$ basis outcomes, so raw support size is not the issue. The condition is on structure and efficient preparability. First, the invariant measure must carry non-factorisable spatial correlations. If it factorises, the trained Q-Prior collapses to a product state, the extracted Pauli correlators factorise across sites, and the constraint on the downstream classical model becomes uninformative; the quantum register then supplies nothing that a classical product representation does not already give. The advantage does not come from holding the distribution itself but from long-range correlations reaching the downstream learner. Second, the low-order marginals must be efficiently preparable by a polynomial-resource parametrised circuit, so that the Q-Prior can be trained in the first place. Spatially extended chaotic systems, including atmospheric, oceanic, plasma and turbulent flows, satisfy both conditions. The framework thus makes the advantage's dependence on correlation range explicit and testable.

\section{Case studies}

\subsection{Case 1: Turbulent channel flow}
\label{sec:case1}

\paragraph*{\emph{A chaotic system that instantiates condition~(ii) of Definition~\ref{def:pqa} in the named non-diagonal regime.}}
Case~1 supplies the meaning leg of the thesis: with condition~(i) of Definition~\ref{def:pqa} held by Result~\ref{res:readout}, a turbulent flow instantiates condition~(ii) in a workflow where the non-diagonal read-out carries a named physical meaning. A real scalar field exposes only computational-basis ($Z$-type, magnitude) statistics; access to the non-diagonal sector requires a system whose invariant measure carries intrinsic phase. The velocity is a vector field whose local orientation $\theta = \arg(v_x + i v_y)$ is an intrinsic phase. Encoding $\theta$ at each site in a qubit phase, $|\psi(\theta)\rangle = (|0\rangle + e^{i\theta}|1\rangle)/\sqrt{2}$, the off-diagonal read-out $\langle \sigma^+_A \sigma^-_B \rangle = \tfrac{1}{4} \langle e^{i(\theta_A - \theta_B)} \rangle$, with $\sigma^\pm = (X \pm iY)/2$ the standard raising and lowering combinations of the Pauli operators on sites $A$ and $B$, equals the two-point directional coherence, a named correlator of the turbulent invariant measure invisible to the diagonal magnitude statistics.

We instantiate this on turbulent channel-flow data~\cite{wang2026qiml}. After subtracting the mean wall-normal profile, the directional coherence $C(r)$ is large at small streamwise separation and decays to the decorrelated control level at large $r$ (Fig.~\ref{fig:tcf1}B). The two-copy Bell read-out recovers $C(r)$ within $\eta$ across the full range (Fig.~\ref{fig:tcf1}D), and the quantum--classical copy-measurement separation, with $M_Q$ flat against $M_C \sim 3^{\nq}/\eta^2$, holds in simulation and on the IQM Garnet ($\nq \leq 10$) and Emerald ($\nq = 12$ to $16$) processors (Fig.~\ref{fig:tcf1}C). Within the complementary triangle of Definition~\ref{def:pqa}, Case~1 fixes meaning, Fig.~\ref{fig:measure} fixes scaling, and Case~2 fixes workflow value. Relative to Ref.~\cite{wang2026qiml}, whose channel-flow rollouts we reuse as baselines, the new elements are the directional-coherence two-copy read-out and its identification as a named non-diagonal correlator of the invariant measure, the multi-site $k = 1 + 2$ generative Q-Prior, and the three-arm ablation that isolates the multi-site contribution and its recovery of invariant-measure statistics at the direct numerical simulation (DNS) reference level.

The coherence read-out above operates on states prepared by direct phase encoding of DNS snapshots (Eq.~\eqref{eq:Cr} in Appendix~\ref{sec:tcf}): $\theta \to |\psi(\theta)\rangle = (|0\rangle + e^{i\theta}|1\rangle)/\sqrt{2}$, with no generator training. The downstream ablation below uses a distinct object: variationally trained generative Q-Priors fitted to the velocity statistics, whose diagonal marginals constrain a Koopman rollout. The two experiments share the Bell read-out primitive but differ in state preparation and purpose: direct DNS-phase encoding for the named non-diagonal correlator, generative training for the reusable statistical prior.

\begin{figure*}[htbp]
\centering
\includegraphics[width=0.9\textwidth]{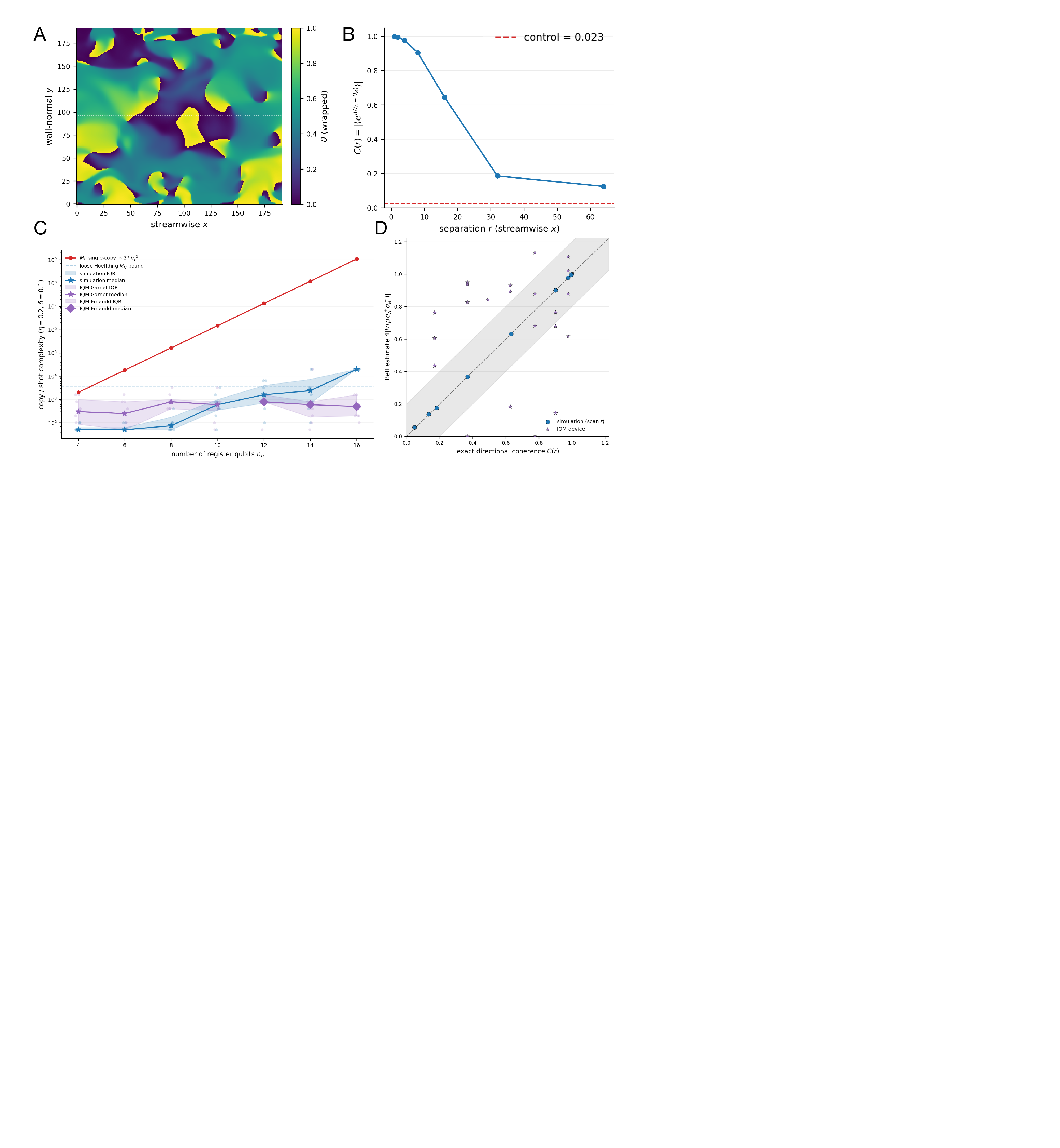}
\caption{QIML on turbulent channel flow: read-out mechanism. (A)~Velocity-direction phase field $\theta = \arg(v_x + i v_y)$ of a representative snapshot (wall-normal $y$ versus streamwise $x$), encoded one site per qubit. (B)~Directional coherence $C(r) = |\langle e^{i(\theta_A - \theta_B)} \rangle|$ versus streamwise separation $r$ after subtracting the mean wall-normal profile; $C$ decays from near unity at small $r$ to the decorrelated control level (dashed). (C)~Bell read-out scaling at $\eta = 0.2$, $\delta = 0.1$: copy-pair cost $M_Q$ on the directional coherence in simulation (blue), and on IQM Garnet ($\nq \leq 10$) and Emerald ($\nq = 12$ to $16$) hardware (purple); medians with interquartile bands. The red line $M_C \sim 3^{\nq}/\eta^2$ is the single-copy classical-shadow cost for the generic full-Pauli class on which Result~\ref{res:readout}'s $\Omega(2^{\nq})$ lower bound operates; the directional coherence is a fixed two-local functional within the same Bell read-out family. (D)~Bell two-copy estimate $4|\operatorname{tr}(\rho \sigma^+_A \sigma^-_B)|$ versus the exact directional coherence $C(r)$ across separations, in simulation (blue) and on IQM hardware (purple); shaded $\pm\eta$ band. Large-coherence values are recovered within $\eta$ on hardware; small values are noise-floor limited (median $|\mathrm{hardware} - \mathrm{exact}| = 0.032$; Appendix~\ref{sec:hardware}). Downstream rollout and field-statistics evaluation appear in Fig.~\ref{fig:tcf2}.}
\label{fig:tcf1}
\end{figure*}

We then train two generative Q-Priors, a single-site ($k = 1$) and a multi-site ($k = 1 + 2$), and use them as soft constraints on a Koopman rollout to test their downstream value in this same workflow, in a three-arm ablation that isolates the contribution of the multi-site $k = 2$ structure. Arm~(a) uses no Q-Prior; arm~(b) uses a single-site $k = 1$ Q-Prior that constrains only the rolled-out global velocity probability density function (PDF); arm~(c) uses the full $k = 1 + 2$ Q-Prior, which also constrains the rolled-out two-point joint distribution at streamwise separation $r = 8$. The second-order constraint is enabled partway through training. The first-order baseline alone reaches $89.5\%$ held-out one-step rollout agreement; adding the second-order constraint raises this to $93.4\%$, a $+3.9$ percentage point (pp) gain attributable to the multi-site structure and a total $+12.3$~pp over the unregularised baseline (Fig.~\ref{fig:tcf2}G,H). The rolled-out fields recover invariant-measure statistics across all four diagnostics examined. The time-averaged streamwise velocity and the spatial distribution of turbulent kinetic energy are restored from a near-laminar baseline collapse to the DNS wall-region structure (Fig.~\ref{fig:tcf2}A,B). The radial energy spectrum tracks DNS, with the log root-mean-square error (RMSE) against DNS dropping from $0.20$ for the unregularised baseline to $0.05$ for the $k = 1 + 2$ prior (Fig.~\ref{fig:tcf2}E). The global speed PDF likewise matches DNS, with the Kolmogorov--Smirnov (KS) distance reduced from $0.06$ to $0.003$ (Fig.~\ref{fig:tcf2}F). Two-point pair structure is also closer to DNS for the $k = 1 + 2$ prior than for either alternative: at $r = 8$ the joint correlation is $0.851$ for the $k = 1 + 2$ prior and $0.861$ for the $k = 1$ prior, both close to the DNS reference $0.832$, while the unregularised arm overshoots to $0.937$ because its rollout collapses onto an artificially over-correlated, near-laminar field with vanishing turbulent kinetic energy (Fig.~\ref{fig:tcf2}C). Rollout snapshots confirm the same picture: the QIML rollout preserves turbulent structure, whereas the Koopman-only, Fourier neural operator (FNO)~\cite{li2020fourier} and Markov neural operator (MNO)~\cite{li2022learning} baselines drift or collapse to static or low-rank fields (Fig.~\ref{fig:tcf2}D). The channel-flow baselines follow Ref.~\cite{wang2026qiml}; the ERA5 study below replaces MNO with the adaptive Fourier neural operator (AFNO)~\cite{guibas2021afno}, matching the operator-learning baselines of that benchmark. Case~1 thus instantiates the framework on a benchmark turbulent flow with the full complementary scope of Definition~\ref{def:pqa}: a provable read-out separation, a named physical non-diagonal correlator, and direct downstream invariant-measure value.

\begin{figure*}[htbp]
\centering
\includegraphics[width=0.95\textwidth]{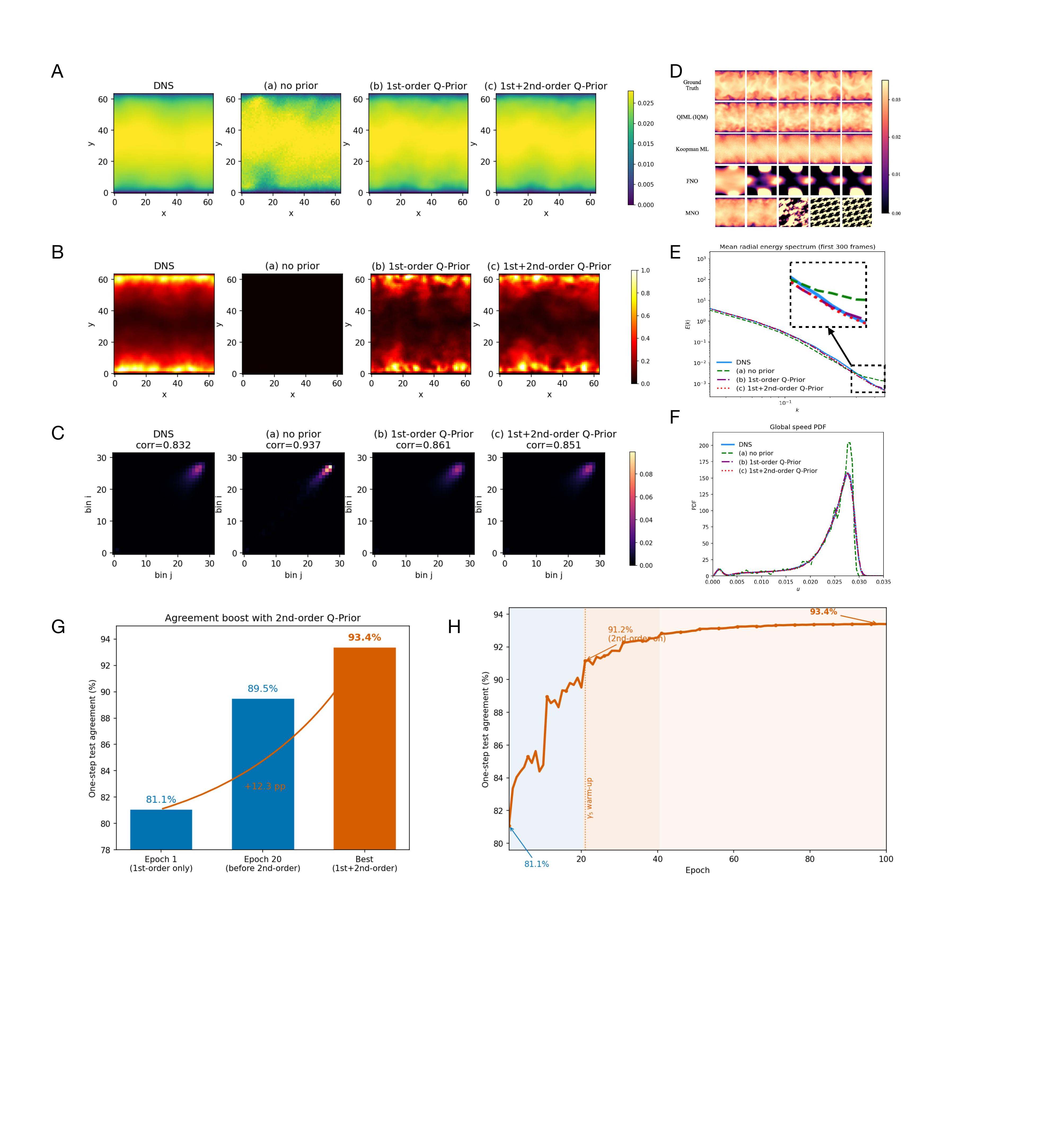}
\caption{QIML downstream performance on turbulent channel flow: three-arm ablation of the $k$-indexed Q-Prior structure. The three Koopman-rollout arms are (a)~no Q-Prior, (b)~single-site ($k = 1$) Q-Prior constraining the rolled-out global velocity PDF, and (c)~$k = 1 + 2$ Q-Prior also constraining the rolled-out two-point joint distribution at streamwise separation $r = 8$. DNS denotes the direct numerical simulation reference. (A)~Time-averaged streamwise velocity field. (B)~Time-averaged turbulent kinetic energy (TKE); arm~(a) collapses to a near-laminar state with vanishing TKE, while arms~(b) and (c) recover the DNS wall-region structure. (C)~Two-point joint distribution at $r = 8$ with $q = 5$ bins per site, with Pearson correlations $0.832$ (DNS), $0.937$ (no prior), $0.861$ ($k = 1$) and $0.851$ ($k = 1 + 2$); the no-prior overshoot reflects a rollout collapse onto an over-correlated, near-laminar field, not a closer match to the DNS pair structure. (D)~Rollout snapshots in Lyapunov-time units (one step $= 0.01$ Lyapunov time); ground-truth, Koopman-only, FNO and MNO rollouts are from~\cite{wang2026qiml}, only the QIML rollout is computed here. (E)~Mean radial energy spectrum $E(k)$ averaged over the first $300$ rollout frames, with zoom on the small-scale region. (F)~Global speed PDF. (G)~Held-out one-step test agreement for arm~(c) at three training milestones; numerical values are reported in Section~\ref{sec:case1}. (H)~Training curve of (G); the second-order Q-Prior constraint is enabled at epoch~$21$ with linear warm-up. Field-statistics summary in Appendix~\ref{sec:tcf}.}
\label{fig:tcf2}
\end{figure*}

\subsection{Case 2: ERA5 medium-range forecast}
\label{sec:case2}

Case~2 carries the same architecture into the operational regime of global atmospheric forecasting, at the diagonal pairwise level appropriate to a geopotential height field. Medium-range weather forecasting combines high practical importance with chaotic dynamics whose invariant measure is well defined, the long-term time-averaged distribution of the atmospheric field, and iterated forecasts tend to drift towards a static mean field, a known long-horizon failure mode of recurrent rollouts~\cite{schiff2024dyslim}. Classical invariant-measure regularisation for chaotic forecasting is an established methodology (e.g.\ DySLIM~\cite{schiff2024dyslim}); our contribution is the quantum statistical prior that supplies it, a compact reusable representation ($\mathcal{O}(10^2)$ parameters, Table~\ref{tab:params}) whose read-out reaches across the full-Pauli spectrum, accessing the higher-order and non-diagonal invariant-measure structure that lies outside the reach of a covariance-only regulariser (Result~\ref{res:readout}; the turbulent-flow directional coherence, Section~\ref{sec:case1}). The case study uses the 500\,hPa geopotential height field ($Z_{500}$) from the European Centre for Medium-Range Weather Forecasts (ECMWF) ERA5 reanalysis~\cite{hersbach2020era5}, an observation-anchored record of global atmospheric state used as the training reference of modern data-driven weather forecasting~\cite{lam2023learning,bi2023accurate,chen2023fengwu,pathak2022fourcastnet} and the basis of the WeatherBench medium-range benchmark~\cite{rasp2020weatherbench}. Fields are represented at $1.5^\circ$ resolution. The training period is 1979 to 2015 and the held-out test period is 2016 to 2017. Models are trained for 24\,h-ahead one-input one-output recurrent prediction and evaluated by iterated rollouts to 240\,h, with auxiliary 480\,h collapse diagnostics reported in Appendix~\ref{sec:results}. The baselines are a Koopman autoregressive model without Q-Prior, the FNO and the AFNO, trained and evaluated under the same recurrent rollout protocol; comparison is restricted to operator-learning baselines under matched recurrent rollout protocols, since foundation models such as GraphCast, Pangu-Weather, FengWu and FourCastNet operate at higher resolution and broader variable scope and are not directly comparable under the present protocol. All models use a single-step recurrent rollout protocol; the neural-operator baselines are not multi-step fine-tuned, a conservative choice for their rollout stability, since autoregressive multi-step fine-tuning is known to improve rollout behaviour~\cite{lam2023learning}.

\begin{figure*}[htbp]
\centering
\includegraphics[width=0.88 \textwidth]{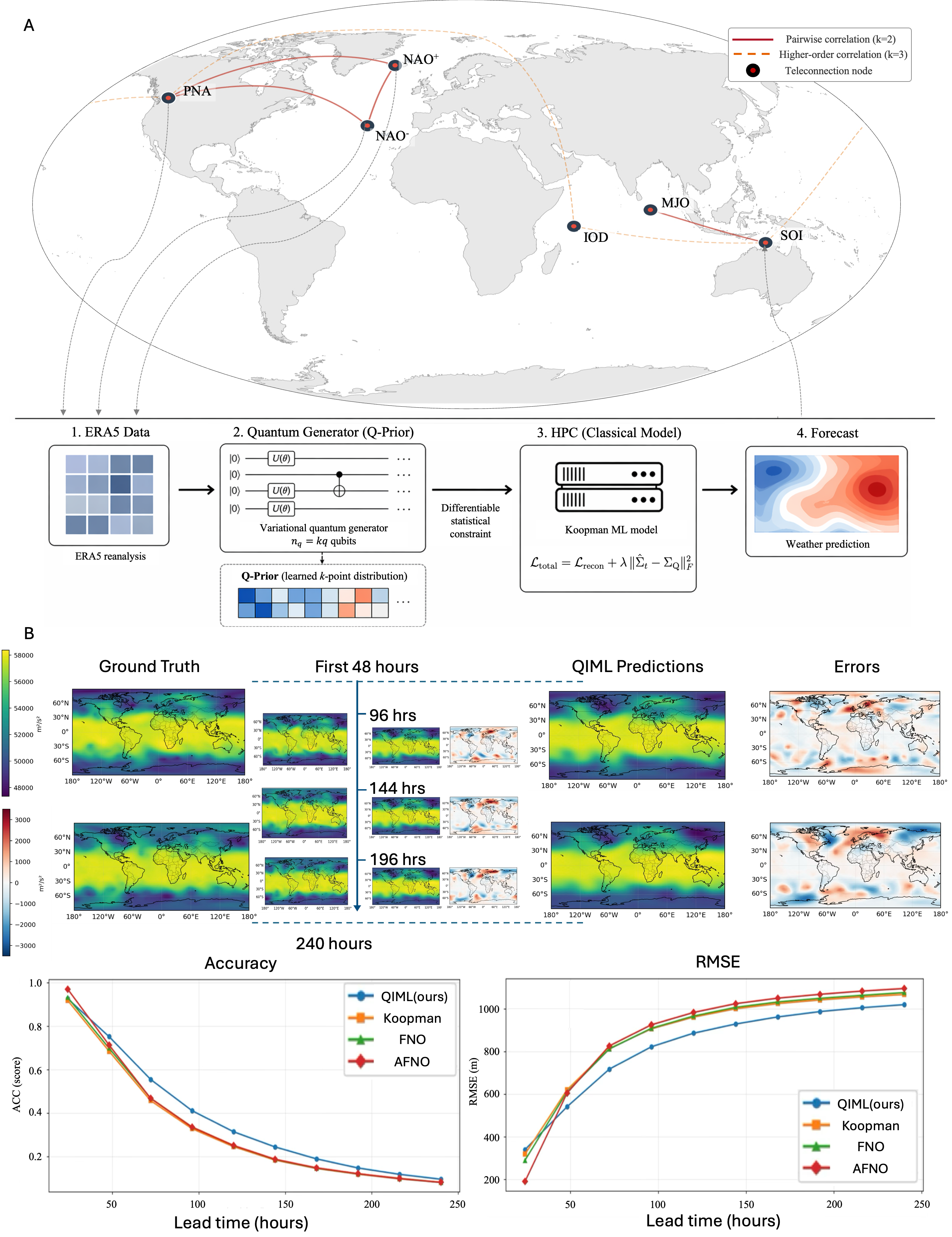}
\caption{QIML applied to weather forecasting. Panel A shows the QIML weather pipeline. Solid links represent the single-site and pairwise $k \leq 2$ priors instantiated in the reported case study, connecting illustrative teleconnection nodes (e.g., PNA, NAO, MJO, IOD, SOI). Dashed higher-order links indicate $k\geq3$ Q-Prior extensions whose read-out tasks are covered by the theoretical results of this work. A parametrised quantum generator on $\nq=kq$ qubits is trained from samples so its induced low-order Q-Prior statistics match the data; the extracted Q-Prior statistics enter the Koopman autoregressive model through a differentiable statistical constraint. Panels B and C show representative $Z_{500}$ rollouts and ACC / RMSE comparison against Koopman, FNO and AFNO baselines.}
\label{fig:weather}
\end{figure*}

The reported case study uses the single-site and pairwise $k \leq 2$ Q-Prior, supplying diagonal projector marginals (equivalently, expectations of $Z$-type Pauli observables) that determine the covariance target $\Sigma_Q$ in Eq.~\eqref{eq:loss}. Result~\ref{res:readout} establishes the read-out separation for the general Pauli class, including the higher-$k$ extensions; the case study instantiates the framework at the diagonal pairwise level.
The Q-Prior improves the anomaly correlation coefficient (ACC) of the Koopman model by 10 to 39\% across 48 to 240\,h lead times and reduces the 48\,h RMSE by about 13\%. In the results summarised in Fig.~\ref{fig:weather}, it outperforms FNO and AFNO at every lead time beyond 24\,h. At short lead times, neural-operator baselines remain competitive because the forecast remains close to the initial condition and errors are dominated by local, high-frequency structure. At medium lead times, recurrent drift and mismatch with invariant statistics become increasingly important, and the Q-Prior constraint helps preserve the rollout distribution. At longer lead times the model is calibrated to reduce drift towards a static mean field and to preserve more coherent time-varying structure, consistent with the invariant-measure interpretation of the prior, rather than to extend deterministic predictability. The long-horizon contribution is rollout stability and invariant-measure fidelity rather than pointwise skill: at 240\,h the Q-Prior reaches ACC $0.114$ against $0.082$ for the unregularised Koopman baseline (Table~\ref{tab:collapse}). The gain is carried by the invariant-statistics constraint itself: a scale-matched classical-covariance control reproduces it when granted the empirical statistic directly (Appendix~\ref{sec:era5}), and the Q-Prior supplies the same constraint from an $\mathcal{O}(10^2)$-parameter compressed state whose read-out also reaches the non-diagonal sector (Results~\ref{res:representation} and~\ref{res:readout}).

Beyond the forecast performance, the hardware footprint of the extraction stage at the case-study scale is modest. It requires two copies of the trained generator and Bell measurements between corresponding qubits. For the weather-relevant range $k = 2$ to $3$ and $q = 5$, this uses $2 \nq = 20$ to $30$ physical qubits. The hardware target is close to leading near-term superconducting platforms operating near $\sim 99.9\%$ two-qubit fidelity~\cite{Preskill_2018,kjaergaard2020superconducting,kandala2019error,giurgica2020digital}. The Garnet and Emerald runs demonstrate the Bell read-out primitive at this scale; the forecast pipeline itself is simulator-based in the present case study, with full hardware accounting given in Appendix~\ref{sec:hardware}. The $32$-qubit two-copy point in Fig.~\ref{fig:measure} corresponds to $\nq = 16$ at $q = 8$ used as a read-out scaling demonstration, and is distinct from the $k = 2$ to $3$, $q = 5$ ($20$ to $30$ qubit) range relevant for the ERA5 case study.

\section{Discussion}
\label{sec:discussion}

This work set out to establish one thesis: an early quantum device can hold a checkable advantage as a special-purpose module inside a classical scientific workflow, a statistical compressed memory with a collective two-copy read-out. The term is now defined by what has been proven and measured. The memory is compressed (Result~\ref{res:representation}): $\nq = kq$ qubits host a non-factorisable $k$-point marginal of the invariant measure, conditional on efficient preparability, with $170$ trained parameters against $1{,}023$ tabulated probabilities at the case-study scale. It is trained once and queried \emph{post hoc}: a single Bell measurement record answers Pauli functionals revealed only after measurement. Its provable resource is the read-out (Result~\ref{res:readout}): $M_Q = O(\eta^{-4} \log(1/\delta))$ copy pairs independent of $\nq$, against $M_C = \Omega(2^{\nq})$ for adaptive single-copy read-out of the full-Pauli task, measured at $M_C / M_Q \approx 1.3 \times 10^6$ at $\nq = 16$ on trained priors. Result~\ref{res:memory} combines the two into a statistical-memory module whose storage and read-out costs both fall exponentially in $kq$ below their classical counterparts.

The dynamics-measure duality of Eq.~\eqref{eq:loss} provides the link between this mechanism and downstream scientific value: the quantum-compressed covariance $\Sigma_Q$ steers the Koopman rollout towards the invariant measure. The separation of Result~\ref{res:readout} lives in the full-Pauli regime, while the ERA5 forecast workflow lives at the diagonal $k \leq 2$ level; the three demonstrations close this span as complementary instantiations of Definition~\ref{def:pqa}: Case~1 fixes meaning, Fig.~\ref{fig:measure} fixes scaling, and Case~2 fixes workflow value. Definition~\ref{def:pqa} is therefore met. Two points sharpen the picture.

First, the extracted statistic is not arbitrary: Q-Priors target the invariant measure of the underlying chaotic system, an object with a long history in dynamical-systems theory and a direct role in long-horizon stability.

Second, the architecture generalises across scientific domains and persists across hardware generations. It applies wherever long-horizon accuracy is governed by drift away from a physical invariant measure, including atmospheric dynamics, plasma transport, and biomolecular free-energy modelling, where invariant statistics likewise carry the relevant long-time information and classical machine-learning models routinely face stability and distribution-shift issues over long rollouts. Once trained, a Q-Prior encodes domain-specific invariant statistics in the circuit parameters $\theta$, and serving it at inference requires only the fixed Bell read-out primitive of Result~\ref{res:readout}, with no further classical-to-quantum data transfer: the quantum device acts as a domain-specialised coprocessor exposing a small set of statistical functionals to the classical workflow, analogous to specialised accelerators in classical computing, though far less mature. This role does not expire with error correction; as fault-tolerant hardware comes online, the same module serves the same interface with higher fidelity and larger $k$.

The frontier this architecture opens is specific. Higher-order $k \geq 3$ Q-Priors capture multi-point correlations beyond the pairwise covariance used here, and stronger generators extend the non-diagonal regime, already exercised by the turbulent channel-flow directional coherence, to richer physical correlators toward the full-Pauli read-out regime where Result~\ref{res:readout} applies in its general form. A named observable that lies inside the separation regime and carries downstream value, a higher-order aggregated phase invariant of the flow such as a helical polyspectrum, is the concrete next target this framework sets up. On the demonstration side, the same framework points to an end-to-end wall-clock comparison against the strongest classical pipeline with error-mitigated hardware extraction; on the theory side, to whether the prior's downstream value can be established by proof rather than by demonstration, which for classical scientific data would require either a quantum data-generating process or a direct theoretical link between higher-order separation-regime invariants and long-horizon stability. Standard caveats regarding floating-point representation of chaotic dynamics on digital computers apply; their effect on long-rollout invariant-measure approximation lies outside the present scope. Proofs, extended protocols, auxiliary figures and hardware accounting are given in the appendices.


\begin{acknowledgments}
The authors thank IQM Quantum Computers for access to the Garnet and Emerald processors and technical discussions on superconducting hardware and roadmaps, the staff of the Leibniz Supercomputing Centre for operational support, and NVIDIA for assistance with GPU software stacks. Part of the performance results have been obtained on systems in the test environment BEAST (Bavarian Energy Architecture \& Software Testbed) at the Leibniz Supercomputing Centre. P.V.C. acknowledges funding support from the European Commission CompBioMed Centre of Excellence (Grant Nos. 675451 and 823712) and from the UK Engineering and Physical Sciences Research Council through UKCOMES (EP/R029598/1) and SEAVEA (EP/W007711/1). P.V.C. also acknowledges support from DOE INCITE awards (2025 to 2026) for computational resources at the Oak Ridge and Argonne Leadership Computing Facilities under the COMPBIO3 project.
\end{acknowledgments}

\appendix

\section{Notation, representation, and the read-out task}
\label{sec:setup}

We work on $n_q = kq$ qubits, where $k$ is the number of spatial locations modelled jointly by the Q-Prior and $q = \log_2 B$ is the number of qubits per location ($B = 2^q$ bins per location). The trained Q-Prior state is $\rho_{\theta}^{(k)} = U(\theta)|0^{n_q}\rangle\langle 0^{n_q}|U(\theta)^\dagger$. We write $\rho$ for an arbitrary $n_q$-qubit state when the Q-Prior context is not needed, $\rho_0 = I/2^{n_q}$ for the maximally mixed state, and $\rho^{(s,P)} = (I + s\alpha P)/2^{n_q}$ with $\alpha = 0.9$ for the hard-instance family used in the lower bound. Pauli operators are written $P = \bigotimes_{l=1}^{n_q}\sigma_l$ with $\sigma_l \in \{I,X,Y,Z\}$. Bell kets follow the identification $|\beta_{00}\rangle = |\Phi^+\rangle$, $|\beta_{11}\rangle = |\Phi^-\rangle$, $|\beta_{01}\rangle = |\Psi^+\rangle$, $|\beta_{10}\rangle = |\Psi^-\rangle$.

\subsection{Compact representation of invariant-measure marginals}
\label{sec:representation}

The representation stage rests on a comparison of three ways to store the $k$-point marginal $\mathcal{P}^{(k)}_\mu$ of the invariant measure $\mu$. With each of the $k$ sites discretised into $B = 2^q$ bins, $\mathcal{P}^{(k)}_\mu$ is a probability distribution over $B^k = 2^{kq}$ joint bin-outcomes.

An explicit tabulation of $\mathcal{P}^{(k)}_\mu$ stores one weight per outcome subject to normalisation, hence $N_C = B^k - 1 = 2^{kq} - 1$ free real parameters. A single-site product representation $\mathcal{P}^{(k)}_\mu \approx \bigotimes_{j=1}^{k} p_j$, with each $p_j$ a distribution over the $B$ bins at site $j$, stores $N_{\mathrm{prod}} = k(B-1) = k(2^q - 1)$ parameters; being a product, it has vanishing connected correlation between any two distinct sites and so cannot represent a non-factorisable marginal. The Q-Prior generator $U(\theta)$ on $n_q = kq$ qubits, a brick-wall hardware-efficient circuit of depth $L$, carries $N_Q = O(L\,n_q)$ real gate parameters and produces computational-basis samples $s$ with probabilities $p_{\theta}(s) = |\langle s|U(\theta)|0^{n_q}\rangle|^2$ over the same $2^{kq}$ outcomes; entangling gates acting across the site partition are what allow the sample distribution $p_{\theta}$ to be non-factorisable.

\begin{proposition}[Representation cost]
\label{prop:representation}
Suppose $\mathcal{P}^{(k)}_\mu$ is preparable to total-variation distance $\varepsilon$ by a brick-wall circuit $U(\theta)$ of depth $L = \mathrm{poly}(n_q)$. Then the Q-Prior represents $\mathcal{P}^{(k)}_\mu$ to accuracy $\varepsilon$ with $N_Q = O(L\,n_q) = \mathrm{poly}(n_q)$ parameters: an exponential reduction in $kq$ relative to the $N_C = 2^{kq}-1$ parameters of explicit tabulation, while, unlike the product representation with its $N_{\mathrm{prod}} = k(2^q-1)$ parameters, retaining the non-factorisable inter-site correlations of $\mathcal{P}^{(k)}_\mu$ up to the approximation accuracy $\varepsilon$.
\end{proposition}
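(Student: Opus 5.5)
The plan is to prove Proposition~\ref{prop:representation} in three parts: a parameter count, a comparison with $N_C$, and a stability argument for correlations under total-variation approximation.

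\emph{Parameter count.} A brick-wall hardware-efficient circuit of depth $L$ on $n_q$ qubits has, in each layer, at most $n_q$ single-qubit rotation blocks. Each block carries a constant number $c$ of real angles (for example three Euler angles). It also has at most $\lfloor n_q/2\rfloor$ two-qubit entangling gates, each with at most a constant number of parameters. Summing over the $L$ layers gives
\[
N_Q \le L\bigl(c\,n_q + c'\lfloor n_q/2\rfloor\bigr) = O(L\,n_q).
\]
By hypothesis $L = \mathrm{poly}(n_q)$, so $N_Q = \mathrm{poly}(n_q)$. The representation claim itself is then immediate from the hypothesis. There is a $\theta^\star$ with
\[
\mathrm{TV}\bigl(p_{\theta^\star}, \mathcal{P}^{(k)}_\mu\bigr) \le \varepsilon .
\]
The Born distribution $p_{\theta^\star}$ of Eq.~\eqref{eq:generator_state} is therefore the $\varepsilon$-accurate representation, and it is specified by the $N_Q$ numbers $\theta^\star$.

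\emph{Exponential reduction.} Using $n_q = kq$, I will show $N_C/N_Q = (2^{n_q}-1)/\mathrm{poly}(n_q) = 2^{\Omega(n_q)}$. This follows because $\log \mathrm{poly}(n_q) = O(\log n_q) = o(n_q)$.

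\emph{Retention of correlations.} This step needs the most care, because ``retaining non-factorisable correlations up to $\varepsilon$'' must be given a quantitative meaning. I will use the connected two-site correlator for bounded test functions $f, g$ on bins with $\|f\|_\infty, \|g\|_\infty \le 1$:
\[
\mathrm{Cov}_p(f_i, g_j) = \mathbb{E}_p[f_i g_j] - \mathbb{E}_p[f_i]\,\mathbb{E}_p[g_j].
\]
For any bounded $h$, $|\mathbb{E}_p h - \mathbb{E}_{p'} h| \le 2\|h\|_\infty\,\mathrm{TV}(p,p')$, and each term of the covariance is such an expectation. A triangle inequality then gives
\[
\bigl|\mathrm{Cov}_{p_{\theta^\star}} - \mathrm{Cov}_{\mathcal{P}^{(k)}_\mu}\bigr| \le 2\varepsilon + 4\varepsilon = 6\varepsilon .
\]
Two consequences follow:
\begin{itemize}
\item Every inter-site correlation of $\mu$ is reproduced to $O(\varepsilon)$.
\item If $\mathcal{P}^{(k)}_\mu$ has some correlator exceeding $6\varepsilon$ in magnitude, then $p_{\theta^\star}$ is itself non-factorisable.
\end{itemize}
By contrast, any product model has all such covariances identically zero. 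Its TV distance to $\mathcal{P}^{(k)}_\mu$ is therefore bounded below by $|\mathrm{Cov}_\mu|/6$, with the same constant, whatever its $N_{\mathrm{prod}}$ parameters are.

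The main obstacle is the last step: the sense of ``retaining'' has to be fixed carefully so that the TV hypothesis actually transfers. I would state it in the covariance form above rather than in terms of entanglement of the state, since only the Born distribution enters.
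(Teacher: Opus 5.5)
Your proposal is correct and follows the same route as the paper's proof: count $O(L\,n_q)$ gate angles, compare with $2^{kq}-1$, and transfer the connected inter-site correlators through the TV hypothesis while noting that they vanish identically for product models. Your version is sharper than the paper's in three ways. It makes the paper's unspecified $O(\varepsilon)$ explicit as $6\varepsilon$. It shows that $p_{\theta^\star}$ must be non-factorisable whenever some correlator of $\mathcal{P}^{(k)}_\mu$ exceeds $6\varepsilon$, which also gives the paper's remark that cross-partition entanglers are necessary. And it shows that any product model incurs TV error at least $|\mathrm{Cov}_\mu|/6$.
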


\begin{proof}
The parameter counts for the explicit and product representations are immediate: a distribution over $2^{kq}$ outcomes has $N_C = 2^{kq}-1$ free weights after normalisation, and a product of $k$ single-site distributions over $B = 2^q$ bins has $N_{\mathrm{prod}} = k(B-1)$ free weights. A brick-wall circuit of depth $L$ on $n_q$ qubits carries $N_Q = O(L\,n_q)$ gate parameters; by hypothesis some such circuit with $L = \mathrm{poly}(n_q)$ reproduces $\mathcal{P}^{(k)}_\mu$ to total-variation distance $\varepsilon$, giving the stated $\mathrm{poly}(n_q)$ count, which is exponentially below $N_C$ since $n_q = kq$. For the correlation claim, a circuit that factorises across the site partition, $U(\theta) = \bigotimes_{j=1}^{k} U_j$ with $U_j$ acting on site $j$, maps $|0^{n_q}\rangle$ to a product state and hence yields a factorised $p_{\theta}$ whose connected inter-site correlators vanish; representing a non-factorisable marginal therefore requires entangling gates across the site partition. The brick-wall ansatz includes two-qubit entanglers acting across that partition and so is not of the product form. Under the preparability hypothesis $p_{\theta}$ lies within total-variation distance $\varepsilon$ of $\mathcal{P}^{(k)}_\mu$, so each connected inter-site correlator of $p_{\theta}$ differs from that of $\mathcal{P}^{(k)}_\mu$ by at most $O(\varepsilon)$, whereas every connected inter-site correlator of the product representation is identically zero.
\end{proof}

\begin{remark}
The preparability hypothesis is the sole non-trivial condition. A polynomial-depth circuit reaches only a polynomially parametrised submanifold of the $2^{kq}$-dimensional state space, so not every marginal is representable to fixed accuracy; Proposition~\ref{prop:representation} is conditional, and applies to invariant measures whose low-order marginals carry efficiently representable non-factorisable structure.
\end{remark}

\subsection{Task definition}

\begin{definition}[Post hoc Pauli read-out task]
\label{def:task}
Fix accuracy $\eta \in (0, \alpha/2)$ and confidence $\delta \in (0, 1/2)$. The learner receives $M$ identical copies of an unknown state $\rho$ on $n_q$ qubits and may apply any sequence of measurements (joint or single-copy, possibly adaptive) before the query is revealed. After the learning phase, a Pauli observable $P \in \{I,X,Y,Z\}^{\otimes n_q}$ is revealed, and the learner outputs $\widehat{C}$. The protocol succeeds on input $\rho$ if $|\widehat{C} - |\mathrm{tr}(P\rho)|| \leq \eta$ with probability at least $1 - \delta$ over the algorithm's randomness.
\end{definition}

\begin{definition}[Adaptive single-copy algorithm]
\label{def:classical}
An adaptive single-copy algorithm measures one copy at a time and retains only classical memory between measurements. It is fully described by a rooted directed tree $\mathcal{T}$ of depth $M$. Each internal node $u$ at depth $t$ is associated with a rank-1 positive operator-valued measure (POVM) $\{w_s^u |\varphi_s^u\rangle\langle\varphi_s^u|\}_s$ satisfying $\sum_s w_s^u |\varphi_s^u\rangle\langle\varphi_s^u| = I_{2^{n_q}}$, with weights $w_s^u > 0$ and rank-1 measurement kets $|\varphi_s^u\rangle$. Upon measuring the $t$-th copy of $\rho$ at node $u$, outcome $s$ occurs with probability $w_s^u \langle\varphi_s^u|\rho|\varphi_s^u\rangle$ and determines the next node. Leaves $\ell$ at depth $M$ correspond to final classical memory states from which the prediction $\widehat{C}$ is computed. The probability of reaching leaf $\ell$ along path $(u_0 \xrightarrow{s_1} u_1 \xrightarrow{s_2} \cdots \xrightarrow{s_M} \ell)$ is
\begin{equation}\label{eq:leaf_prob}
p_\rho(\ell) = \prod_{t=1}^{M} w_{s_t}^{u_{t-1}} \langle\varphi_{s_t}^{u_{t-1}}|\rho|\varphi_{s_t}^{u_{t-1}}\rangle.
\end{equation}
\end{definition}

\begin{remark}
The restriction to rank-1 POVMs is without loss of generality: any POVM element $F_i$ can be spectrally decomposed into rank-1 components $\{\lambda_j^{(i)}|v_j^{(i)}\rangle\langle v_j^{(i)}|\}$, and the original POVM is recovered by classical post-processing.
\end{remark}

\subsection{Pauli reduction of the Q-Prior statistics}
\label{sec:pauli-reduction}

To connect the Q-Prior of the main text to the \emph{post hoc} Pauli read-out task, we verify that the relevant statistical functionals reduce to Pauli expectations. The $k$-point marginals
\begin{equation}
\mathcal{P}^{(k)}_{\theta}(b_1,\ldots,b_k) = \mathrm{tr}\!\left(\bigotimes_{l=1}^k \Pi_{i_l,b_l}\, \rho_{\theta}^{(k)}\right)
\end{equation}
are expectation values of tensor-product projectors, where $\Pi_{i_l,b_l}$ projects onto bin $b_l$ at site $i_l$, acting on the $q$ qubits encoding that site. Each computational-basis projector $|b\rangle\langle b|$ on $q$ qubits expands in the diagonal Pauli subalgebra $\{I,Z\}^{\otimes q}$ as a linear combination of $2^q$ diagonal Pauli strings with coefficients $\pm 1/2^q$. A $k$-fold tensor product therefore expands into a linear combination of Pauli operators $P \in \{I,Z\}^{\otimes n_q}$ with $n_q = kq$, with the highest-weight terms having Pauli weight $n_q$. Every $k$-point marginal, and every linear functional thereof (covariances, correlation functions and higher moments), is therefore a linear combination of Pauli expectations $\mathrm{tr}(P\rho_{\theta}^{(k)})$.

The diagonal restriction used in the ERA5 case study (Appendix~\ref{sec:era5}) corresponds to $P \in \{I,Z\}^{\otimes n_q}$. The \emph{post hoc} Pauli read-out task in Definition~\ref{def:task} covers the broader case $P \in \{I,X,Y,Z\}^{\otimes n_q}$, which captures the reusable Q-Prior setting in which downstream queries need not be diagonal in the computational basis. The numerical and hardware demonstration in Appendix~\ref{sec:numdemo} exercises the non-diagonal regime on ERA5-trained Q-Priors at scale, and the turbulent channel-flow read-out in Appendix~\ref{sec:tcf} instantiates it with the velocity-direction coherence as a named physical correlator; the ERA5 case study instantiates the diagonal restriction.

\section{Quantum upper bound: Bell measurement protocol}
\label{sec:upper}

This section proves the quantum upper bound of Result~\ref{res:readout}: the Bell-measurement protocol achieves the $M_Q = O(\eta^{-4} \log(1/\delta))$ bound of Eq.~\eqref{eq:MQ_scaling}. The argument proceeds in three steps: a Bell eigenvalue identity, an unbiased squared-expectation estimator from a single Bell record, and a Hoeffding sample-complexity bound.

\subsection{Bell eigenvalue identity}

\begin{lemma}[Bell eigenvalue property]
\label{lem:bell}
For each $\sigma \in \{I,X,Y,Z\}$, every Bell ket $|\beta_{ab}\rangle$ ($a,b \in \{0,1\}$) is an eigenstate of $\sigma \otimes \sigma$ with eigenvalue $\lambda_{ab}^\sigma \in \{+1, -1\}$.
\end{lemma}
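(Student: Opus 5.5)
The most direct route is to realise the Bell basis as a stabiliser basis: one shows that $X\otimes X$ and $Z\otimes Z$ commute, and then reads off $Y\otimes Y$ as a product of those two. Concretely, write $|\beta_{ab}\rangle = (I\otimes X^{b} Z^{a})|\Phi^+\rangle$, or an equivalent labelling matched to the identification fixed in Appendix~\ref{sec:setup}; only the form of the Pauli correction matters, not the exact labels. For $\sigma=I$ the claim is trivial, since $I\otimes I$ has eigenvalue $+1$ on every Bell ket.

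The core observation comes first. $|\Phi^+\rangle=(|00\rangle+|11\rangle)/\sqrt2$ is fixed by $X\otimes X$, because it swaps $|00\rangle\leftrightarrow|11\rangle$. It is also fixed by $Z\otimes Z$, because both $|00\rangle$ and $|11\rangle$ have even parity. Next, $Y\otimes Y = (iXZ)\otimes(iXZ) = -(X\otimes X)(Z\otimes Z)$, so $Y\otimes Y$ acts on $|\Phi^+\rangle$ with eigenvalue $-1$.

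To transfer this to the other three Bell states, let $Q = X^bZ^a$ be the local correction, so that $|\beta_{ab}\rangle = (I\otimes Q)|\Phi^+\rangle$. Any two single-qubit Paulis either commute or anticommute, so $(\sigma\otimes\sigma)(I\otimes Q) = \epsilon_\sigma(Q)\,(I\otimes Q)(\sigma\otimes\sigma)$ with $\epsilon_\sigma(Q)\in\{\pm1\}$. Applying this to $|\Phi^+\rangle$ gives
\[
(\sigma\otimes\sigma)|\beta_{ab}\rangle = \epsilon_\sigma(Q)\,\lambda^\sigma_{\Phi^+}\,|\beta_{ab}\rangle .
\]
Hence $|\beta_{ab}\rangle$ is an eigenvector of $\sigma\otimes\sigma$ with eigenvalue $\lambda^\sigma_{ab}=\epsilon_\sigma(Q)\lambda^\sigma_{\Phi^+}\in\{\pm1\}$. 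This proves the lemma. As a byproduct one can tabulate the signs: $X\otimes X$ gives $(-1)^a$, $Z\otimes Z$ gives $(-1)^b$, and $Y\otimes Y$ gives $-(-1)^{a+b}$, up to the labelling convention. These signs are what the subsequent estimator needs.

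No step here is a real obstacle. The only place for care is the bookkeeping of the labelling convention $|\beta_{01}\rangle=|\Psi^+\rangle$, $|\beta_{10}\rangle=|\Psi^-\rangle$, so that the explicit sign table agrees with the paper's. I would check this directly by applying $X\otimes X$ and $Z\otimes Z$ to each of the four kets.
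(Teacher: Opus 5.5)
Your proof is correct, but it takes a different route from the paper's. The paper argues by direct computation. It expands each of the four Bell kets in the computational basis and applies $X\otimes X$, $Y\otimes Y$, $Z\otimes Z$ using the single-qubit actions ($X|0\rangle=|1\rangle$, $Y|0\rangle=i|1\rangle$, and so on). It then records the resulting signs in an explicit $4\times 3$ table.

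You instead do a single computation, on $|\Phi^+\rangle$. You identify it as the joint $+1$ eigenvector of the commuting pair $X\otimes X$, $Z\otimes Z$, and obtain the $Y\otimes Y$ eigenvalue from $Y\otimes Y=-(X\otimes X)(Z\otimes Z)$. You then transfer the eigen-relation to the other three kets, using the fact that $\sigma\otimes\sigma$ commutes or anticommutes with the local correction $I\otimes Q$.

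The paper's route gives a concrete table with no conventions to track, which is what the post-processing in Lemma~\ref{lem:bell_expect} reads off. Your route shows why the Bell basis is a simultaneous eigenbasis: it is the stabiliser basis of $X\otimes X$ and $Z\otimes Z$. It also gives the signs in closed form rather than case by case.

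On the labelling you flagged: with the paper's identification $|\beta_{11}\rangle=|\Phi^-\rangle$ and $|\beta_{10}\rangle=|\Psi^-\rangle$, the correction is $Q=X^{a\oplus b}Z^{a}$, not $X^{b}Z^{a}$. The signs then become $\lambda^X_{ab}=(-1)^a$, $\lambda^Z_{ab}=(-1)^{a\oplus b}$, $\lambda^Y_{ab}=-(-1)^{b}$, which reproduces the paper's table. Your stated formulas for $Z\otimes Z$ and $Y\otimes Y$ hold only under your own labelling. This does not affect the lemma, which asserts only the eigenvector property with eigenvalues in $\{\pm1\}$.
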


\begin{proof}
Direct computation using $Z|0\rangle=|0\rangle$, $Z|1\rangle=-|1\rangle$, $X|0\rangle=|1\rangle$, $X|1\rangle=|0\rangle$, $Y|0\rangle = i|1\rangle$, $Y|1\rangle = -i|0\rangle$.

\noindent\textit{Case $\sigma = Z$.} $Z\otimes Z|00\rangle = |00\rangle$, $Z\otimes Z|11\rangle = |11\rangle$, $Z\otimes Z|01\rangle = -|01\rangle$, $Z\otimes Z|10\rangle = -|10\rangle$.

\noindent\textit{Case $\sigma = X$.} $X\otimes X$ swaps the computational basis pairwise; one verifies $|\beta_{00}\rangle \mapsto +|\beta_{00}\rangle$, $|\beta_{11}\rangle \mapsto -|\beta_{11}\rangle$, $|\beta_{01}\rangle \mapsto +|\beta_{01}\rangle$, $|\beta_{10}\rangle \mapsto -|\beta_{10}\rangle$.

\noindent\textit{Case $\sigma = Y$.} $|\beta_{00}\rangle \mapsto -|\beta_{00}\rangle$, $|\beta_{11}\rangle \mapsto +|\beta_{11}\rangle$, $|\beta_{01}\rangle \mapsto +|\beta_{01}\rangle$, $|\beta_{10}\rangle \mapsto -|\beta_{10}\rangle$.

The complete eigenvalue table is
\begin{center}
\begin{tabular}{ccccc}
\toprule
Bell ket & $(a,b)$ & $\lambda_{ab}^X$ & $\lambda_{ab}^Y$ & $\lambda_{ab}^Z$ \\
\midrule
$|\Phi^+\rangle$ & $(0,0)$ & $+1$ & $-1$ & $+1$ \\
$|\Phi^-\rangle$ & $(1,1)$ & $-1$ & $+1$ & $+1$ \\
$|\Psi^+\rangle$ & $(0,1)$ & $+1$ & $+1$ & $-1$ \\
$|\Psi^-\rangle$ & $(1,0)$ & $-1$ & $-1$ & $-1$ \\
\bottomrule
\end{tabular}
\end{center}
In all cases $\lambda_{ab}^I = +1$.
\end{proof}

\subsection{Squared expectation from a single Bell measurement record}

\begin{lemma}[Squared expectation from Bell measurement]
\label{lem:bell_expect}
Let $\rho$ be an $n_q$-qubit state, and consider Bell-basis measurements on the $n_q$ paired qubits of $\rho \otimes \rho$. For any Pauli string $P = \bigotimes_{l=1}^{n_q} \sigma_l$ with $\sigma_l \in \{I,X,Y,Z\}$, the post-processing estimator
\begin{equation}
\widehat{m}(P) = \prod_{l=1}^{n_q} \lambda_{a_l b_l}^{\sigma_l}
\end{equation}
applied to a single Bell-outcome record $\{(a_l,b_l)\}_{l=1}^{n_q}$ satisfies
\begin{equation}
\mathbb{E}[\widehat{m}(P)] = [\mathrm{tr}(P\rho)]^2.
\end{equation}
\end{lemma}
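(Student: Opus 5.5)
The plan is to write the expectation of the estimator as the expectation of a single observable on two copies, and then evaluate it by factorisation. A Bell measurement on the $l$-th pair of qubits (qubit $l$ of the first copy with qubit $l$ of the second) projects onto the Bell basis of that pair. By Lemma~\ref{lem:bell}, $\sigma_l\otimes\sigma_l$ is diagonal in this basis with eigenvalues $\lambda^{\sigma_l}_{ab}$. Hence
\[
\sigma_l\otimes\sigma_l=\sum_{a,b}\lambda^{\sigma_l}_{ab}\,|\beta_{ab}\rangle\langle\beta_{ab}|
\]
on pair $l$, where the Bell projectors sum to the identity on that pair.

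The full measurement on the $2n_q$ qubits has outcomes $\{(a_l,b_l)\}_l$ with projectors $\bigotimes_l |\beta_{a_lb_l}\rangle\langle\beta_{a_lb_l}|_{l}$, after regrouping the qubits pairwise. The estimator $\widehat m(P)$ is the product $\prod_l\lambda^{\sigma_l}_{a_lb_l}$, so it is a function of the outcome. Its expectation is therefore
\[
\mathbb{E}[\widehat m(P)]=\sum_{\{(a_l,b_l)\}}\prod_l\lambda^{\sigma_l}_{a_lb_l}\,\mathrm{tr}\!\Big[\Big(\bigotimes_l|\beta_{a_lb_l}\rangle\langle\beta_{a_lb_l}|\Big)(\rho\otimes\rho)\Big].
\]
Because the sum factorises over $l$, the operator inside becomes $\bigotimes_l(\sigma_l\otimes\sigma_l)$ acting pairwise. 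Undoing the pairwise regrouping identifies this operator with $P\otimes P$ acting on copy 1 $\otimes$ copy 2.

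It then follows that $\mathbb{E}[\widehat m(P)]=\mathrm{tr}[(P\otimes P)(\rho\otimes\rho)]=\mathrm{tr}(P\rho)^2$. The right-hand side is real because $P$ is Hermitian.

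The only step needing care is the qubit-reordering identification, i.e.\ checking that $\bigotimes_l(\sigma_l\otimes\sigma_l)$ in the paired ordering equals $P\otimes P$ in the copy ordering. This holds because the two operators are related by conjugation with the permutation unitary that regroups the qubits, and the same permutation is applied to $\rho\otimes\rho$. I would state this explicitly.

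I would also note two points. Identity factors contribute $\lambda^I=+1$, which is consistent with Lemma~\ref{lem:bell}. The estimator satisfies $|\widehat m|\le 1$, and this bound is what the later Hoeffding step uses.
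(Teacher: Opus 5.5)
Your proposal is correct and follows the paper's proof essentially step for step. Like the paper, you expand the expectation over Bell outcomes and use the per-pair spectral decomposition $\sigma_l\otimes\sigma_l=\sum_{a,b}\lambda^{\sigma_l}_{ab}|\beta_{ab}\rangle\langle\beta_{ab}|$ to collapse the weighted sum of projectors to $\bigotimes_l(\sigma_l\otimes\sigma_l)=P\otimes P$, and then evaluate $\mathrm{tr}[(P\otimes P)(\rho\otimes\rho)]=[\mathrm{tr}(P\rho)]^2$. Your explicit handling of the qubit-regrouping permutation is a small gain in rigour, since the paper leaves that identification implicit.
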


\begin{proof}
Write $S_l = |\beta_{a_l b_l}\rangle\langle\beta_{a_l b_l}|$ for the projector onto the observed Bell ket on the $l$-th qubit pair. By Lemma~\ref{lem:bell}, $\mathrm{tr}((\sigma_l \otimes \sigma_l) S_l) = \lambda_{a_l b_l}^{\sigma_l}$. The expectation of $\widehat{m}(P)$ over Bell outcomes reads
\begin{equation}\label{eq:bell_sum}
\mathbb{E}[\widehat{m}(P)] = \sum_{\{S_l\}} \mathrm{tr}\!\left[\bigotimes_{l=1}^{n_q} S_l\, \rho^{\otimes 2}\right] \prod_{l=1}^{n_q} \mathrm{tr}\!\left((\sigma_l \otimes \sigma_l) S_l\right).
\end{equation}
Define the measurement-averaged operator
\begin{equation}
\mathcal{M}(P) = \sum_{\{S_l\}} \left(\prod_{l=1}^{n_q} \mathrm{tr}((\sigma_l \otimes \sigma_l) S_l)\right) \bigotimes_{l=1}^{n_q} S_l.
\end{equation}
For each qubit pair $l$, the four Bell projectors form a complete basis of eigenstates of $\sigma_l \otimes \sigma_l$, giving the spectral decomposition $\sum_{a_l, b_l} \lambda_{a_l b_l}^{\sigma_l} |\beta_{a_l b_l}\rangle\langle\beta_{a_l b_l}| = \sigma_l \otimes \sigma_l$. Because $\mathcal{M}(P)$ is a tensor product over qubit pairs,
\begin{equation}
\mathcal{M}(P) = \bigotimes_{l=1}^{n_q} (\sigma_l \otimes \sigma_l) = P \otimes P.
\end{equation}
Substituting into \eqref{eq:bell_sum} and using $\mathrm{tr}[(A\otimes B)(X\otimes Y)] = \mathrm{tr}(AX) \mathrm{tr}(BY)$,
\begin{equation}
\mathbb{E}[\widehat{m}(P)] = \mathrm{tr}[(P\otimes P)(\rho \otimes \rho)] = [\mathrm{tr}(P\rho)]^2.
\end{equation}
\end{proof}

The estimator $\widehat{m}(P)$ depends only on the Bell outcomes $\{(a_l, b_l)\}_{l=1}^{n_q}$ together with classical post-processing that applies the Pauli label $P$. A single Bell-measurement record therefore produces, in one shot, an unbiased estimator of $[\mathrm{tr}(P\rho)]^2$ \emph{simultaneously} for all $4^{n_q}$ Pauli operators on $n_q$ qubits. No single-copy protocol has an analogue of this property, because each single-copy measurement commits to a direction before the query is revealed.

\subsection{Sample complexity of the upper bound}

\begin{theorem}[Quantum upper bound]
\label{thm:upper}
For any state $\rho$ and any Pauli $P \in \{I,X,Y,Z\}^{\otimes n_q}$, the Bell measurement protocol returns an estimate $\widehat{C}$ of $|\mathrm{tr}(P\rho)|$ satisfying $|\widehat{C} - |\mathrm{tr}(P\rho)|| \leq \eta$ with probability at least $1 - \delta$ using
\begin{equation}\label{eq:MQ}
M_Q \;\geq\; \frac{2}{\eta^4}\, \ln\frac{2}{\delta}
\end{equation}
copy pairs, independent of $n_q$.
\end{theorem}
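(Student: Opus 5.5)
The plan is to build directly on Lemma~\ref{lem:bell_expect}. We perform $M_Q$ independent Bell-basis measurements, each on a fresh copy pair $\rho\otimes\rho$, and store the full outcome records $\{(a_l^{(j)},b_l^{(j)})\}_{l=1}^{n_q}$ for $j=1,\dots,M_Q$. All measurements happen before the query is revealed. Once $P$ is revealed, we compute $\widehat m_j(P)=\prod_l \lambda^{\sigma_l}_{a_l^{(j)}b_l^{(j)}}$ for each record. By Lemma~\ref{lem:bell_expect}, the $\widehat m_j(P)$ are i.i.d.\ with mean $\mu_P:=[\mathrm{tr}(P\rho)]^2$. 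By Lemma~\ref{lem:bell}, each takes values in $\{-1,+1\}$, because every factor is a Bell eigenvalue $\pm1$.

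We form the empirical mean $\bar m=M_Q^{-1}\sum_j \widehat m_j(P)$ and output
\[
\widehat C=\sqrt{\max(\bar m,0)}.
\]
Since each variable lies in an interval of length $2$, Hoeffding's inequality gives
\[
\Pr\bigl[|\bar m-\mu_P|\ge \epsilon\bigr]\le 2\exp\bigl(-M_Q\epsilon^2/2\bigr).
\]
This is at most $\delta$ as soon as $M_Q\ge (2/\epsilon^2)\ln(2/\delta)$. Choosing $\epsilon=\eta^2$ gives exactly the threshold $M_Q\ge (2/\eta^4)\ln(2/\delta)$ of Eq.~\eqref{eq:MQ}. Nothing in this bound refers to $n_q$, which gives the claimed independence. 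The same record serves every $P$, so the query may be revealed post hoc. (A union bound would only be needed for simultaneous guarantees over many queries; the statement concerns a single query.)

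The step requiring care is converting $\eta^2$-accuracy on the square into $\eta$-accuracy on $|\mathrm{tr}(P\rho)|$. Write $x=|\mathrm{tr}(P\rho)|\ge0$, so $\mu_P=x^2$, and on the good event $|\bar m-x^2|\le\eta^2$. Clipping at $0$ preserves this, since $x^2\ge0$ and projecting onto $[0,\infty)$ is non-expansive, so $|\max(\bar m,0)-x^2|\le\eta^2$.

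The conversion then follows from the elementary inequality $|\sqrt a-\sqrt b|\le\sqrt{|a-b|}$ for $a,b\ge0$. To prove it, assume without loss of generality $a\ge b$. Then $(\sqrt a-\sqrt b)^2\le(\sqrt a-\sqrt b)(\sqrt a+\sqrt b)=a-b$. Applying this gives $|\widehat C-x|\le\eta$ with probability at least $1-\delta$, as required.

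This square-root step is the only non-routine point. It is also the source of the $\eta^{-4}$ rather than $\eta^{-2}$ scaling. The bound is not loose in general: near $x=0$ the square root is non-Lipschitz, so a squared estimate with accuracy $\eta^2$ is genuinely what is needed there.
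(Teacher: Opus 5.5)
Your proposal is correct and follows the paper's proof essentially step for step: the unbiased Bell estimator of Lemma~\ref{lem:bell_expect}, Hoeffding at threshold $\eta^2$ for $[-1,1]$-valued variables, non-expansive clipping at zero, and the inequality $|\sqrt{a}-\sqrt{b}|\le\sqrt{|a-b|}$, which together give the same threshold $M_Q\ge(2/\eta^4)\ln(2/\delta)$. Your explicit proof of the square-root inequality and your remark that one fixed Bell record serves any single post hoc query are small clarifications that the paper leaves implicit.
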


\begin{proof}
The protocol proceeds in two stages.

\noindent\textit{Stage 1 (estimate the squared expectation).}
For each of the $M_Q$ copy pairs, perform the Bell measurement of Lemma~\ref{lem:bell_expect} and record $\widehat{m}^{(t)}(P)$. Compute the empirical mean
\begin{equation}
\widehat{a}(P) = \frac{1}{M_Q} \sum_{t=1}^{M_Q} \widehat{m}^{(t)}(P).
\end{equation}
Each $\widehat{m}^{(t)}(P) \in [-1, 1]$ with $\mathbb{E}[\widehat{m}^{(t)}(P)] = [\mathrm{tr}(P\rho)]^2$. Hoeffding's inequality gives
\begin{equation}
\Pr\!\left[|\widehat{a}(P) - [\mathrm{tr}(P\rho)]^2| \geq \eta^2\right] \;\leq\; 2\exp\!\left(-\frac{M_Q\, \eta^4}{2}\right),
\end{equation}
with the threshold set to $\eta^2$ to anticipate the square-root step.

\noindent\textit{Stage 2 (recover the absolute value).}
Output $\widehat{C} = \sqrt{\max(0, \widehat{a}(P))}$.
Because $[\mathrm{tr}(P\rho)]^2 \geq 0$, non-negative clipping satisfies
\begin{equation}
|\max(0, \widehat{a}(P)) - [\mathrm{tr}(P\rho)]^2| \;\leq\; |\widehat{a}(P) - [\mathrm{tr}(P\rho)]^2|,
\end{equation}
since projection onto $[0, \infty)$ cannot increase the distance to a non-negative target. Setting $x = \max(0, \widehat{a}(P)) \geq 0$ and $y = [\mathrm{tr}(P\rho)]^2 \geq 0$, the elementary Lipschitz bound $|\sqrt{x} - \sqrt{y}| \leq \sqrt{|x - y|}$ on non-negative reals gives, whenever $|\widehat{a}(P) - [\mathrm{tr}(P\rho)]^2| \leq \eta^2$,
\begin{equation}
|\widehat{C} - |\mathrm{tr}(P\rho)|| \;\leq\; \sqrt{\eta^2} = \eta.
\end{equation}

\noindent\textit{Sample complexity.}
Setting the right-hand side of the Hoeffding bound to at most $\delta$, $2\exp\!\left(-M_Q \eta^4 / 2\right) \leq \delta$, yields \eqref{eq:MQ}, which is $O(1)$ in $n_q$.
\end{proof}

The dependence on $\eta$ is $\eta^{-4}$ rather than the more familiar $\eta^{-2}$ because the Bell estimator is unbiased for the \emph{squared} expectation $[\mathrm{tr}(P\rho)]^2$; the square-root conversion to $|\mathrm{tr}(P\rho)|$ contributes the additional factor.
Table~\ref{tab:MQ} lists representative Bell copy-pair counts implied by the sample-complexity bound~\eqref{eq:MQ}.

\begin{table}[htbp]
\centering
\caption{Quantum sample complexity $M_Q$ (copy pairs) for representative accuracy $\eta$ and failure probability $\delta$. All values are exact and independent of the number of Q-Prior qubits $n_q$.}
\label{tab:MQ}
\begin{tabular}{ccrr}
\toprule
$\eta$ & $\delta$ & $M_Q$ (exact) & $M_Q$ (approx.) \\
\midrule
$0.20$ & $0.10$ & $\tfrac{2}{(0.2)^4}\ln 20$  & $\approx 3{,}745$ \\
$0.20$ & $0.05$ & $\tfrac{2}{(0.2)^4}\ln 40$  & $\approx 4{,}611$ \\
$0.20$ & $0.01$ & $\tfrac{2}{(0.2)^4}\ln 200$ & $\approx 6{,}623$ \\
$0.10$ & $0.10$ & $\tfrac{2}{(0.1)^4}\ln 20$  & $\approx 59{,}915$ \\
$0.10$ & $0.05$ & $\tfrac{2}{(0.1)^4}\ln 40$  & $\approx 73{,}778$ \\
$0.10$ & $0.01$ & $\tfrac{2}{(0.1)^4}\ln 200$ & $\approx 105{,}966$ \\
$0.05$ & $0.05$ & $\tfrac{2}{(0.05)^4}\ln 40$ & $\approx 1{,}180{,}441$ \\
$0.05$ & $0.01$ & $\tfrac{2}{(0.05)^4}\ln 200$& $\approx 1{,}695{,}462$ \\
\bottomrule
\end{tabular}
\end{table}

For comparison, the classical-shadow protocol of Huang, Kueng and Preskill estimates $\mathrm{tr}(P\rho)$ for an arbitrary Pauli $P$ from $M$ random single-copy Pauli measurements, with sample complexity $O(3^{n_q}/\eta^2)$ for the full $4^{n_q}$ Pauli set at fixed accuracy~\cite{huang2020shadows}. This complexity remains $n_q$-dependent (see Result~\ref{res:readout}). The Bell-measurement protocol of Theorem~\ref{thm:upper} achieves the $n_q$-independent regime by accessing two copies jointly per round.

\section{Classical lower bound: adaptive single-copy protocols}
\label{sec:lower}

We show that any adaptive single-copy protocol for the \emph{post hoc} Pauli read-out task requires $M_C = \Omega(2^{n_q}) = \Omega(2^{kq})$ copies.

\subsection{Reduction from estimation to discrimination}

\begin{proposition}[Estimation implies discrimination]
\label{prop:reduction}
Fix $\eta < \alpha/2$. Any algorithm that solves the \emph{post hoc} Pauli read-out task with success probability $\geq 1 - \delta$ on the hard-instance family $\{\rho_0\} \cup \{\rho^{(s,P)} : s \in \{\pm 1\}, P \in \mathcal{P}^*\}$ (where $\mathcal{P}^* = \{I,X,Y,Z\}^{\otimes n_q} \setminus \{I^{\otimes n_q}\}$) induces a binary test between the null $\rho_0$ and the sign-mixed alternative $\rho^{(\pm 1, P)}$ such that
\begin{equation}\label{eq:tv_necessity}
\mathbb{E}_{P \sim \mathrm{Unif}(\mathcal{P}^*)}\, \mathrm{TV}\!\left(p_{\rho_0}(\ell),\, \bar{p}_P(\ell)\right) \;\geq\; 1 - 2\delta,
\end{equation}
where $\bar{p}_P(\ell) := \tfrac{1}{2}(p_{\rho^{(+1, P)}}(\ell) + p_{\rho^{(-1, P)}}(\ell))$.
\end{proposition}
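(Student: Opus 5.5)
The plan is to turn the estimator into a threshold test and apply the standard identity between the optimal success probability of a binary test and total variation. Fix $P \in \mathcal{P}^*$ and run the algorithm's learning phase. This is an adaptive single-copy measurement tree $\mathcal{T}$ as in Definition~\ref{def:classical}, and it reaches a leaf $\ell$ with probability $p_\rho(\ell)$ from Eq.~\eqref{eq:leaf_prob}. Then reveal the query $P$ and let the algorithm output $\widehat{C}(\ell, P)$, with any internal randomness absorbed into the leaf distribution by refining the leaves. Define the test $\phi_P(\ell) = 1$ (declare ``alternative'') if $\widehat{C}(\ell,P) > \alpha/2$, and $\phi_P(\ell)=0$ otherwise.

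The key step is to bound the two error probabilities of $\phi_P$. Under the null, $\mathrm{tr}(P\rho_0) = 0$ because $P$ is traceless. Success therefore forces $\widehat{C} \le \eta < \alpha/2$ with probability at least $1-\delta$, so $\Pr_{\rho_0}[\phi_P = 1] \le \delta$. Under either alternative, $|\mathrm{tr}(P\rho^{(s,P)})| = |s\alpha\,\mathrm{tr}(P^2)|/2^{n_q} = \alpha$ for both signs. Success gives $\widehat{C} \ge \alpha - \eta > \alpha/2$ with probability at least $1-\delta$, so $\Pr_{\rho^{(s,P)}}[\phi_P=0] \le \delta$ for $s = \pm 1$. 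Averaging over $s$ gives $\Pr_{\bar p_P}[\phi_P = 0] \le \delta$. The condition $\eta < \alpha/2$ is used exactly here: it makes the intervals $[0,\eta]$ and $[\alpha-\eta,\alpha]$ disjoint and separated by the threshold.

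Next, I would use the variational characterisation $\mathrm{TV}(p,q) = \max_{A} (q(A) - p(A))$ over leaf events $A$. With $A = \{\ell : \phi_P(\ell) = 1\}$ this gives
\[
\mathrm{TV}(p_{\rho_0}, \bar p_P) \ge \bar p_P(A) - p_{\rho_0}(A) \ge (1-\delta) - \delta = 1 - 2\delta .
\]
This holds for every $P \in \mathcal{P}^*$, so it also holds in expectation over $P \sim \mathrm{Unif}(\mathcal{P}^*)$, which is Eq.~\eqref{eq:tv_necessity}.

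The main subtlety, rather than a hard obstacle, is that the test depends on $P$ while the leaf distribution must not. This is exactly the \emph{post hoc} structure of Definition~\ref{def:task}. Because the tree $\mathcal{T}$ is fixed before $P$ is revealed, both $p_{\rho_0}$ and $\bar p_P$ are distributions on the same leaf set, induced by one $P$-independent tree, and only the post-processing $\phi_P$ depends on $P$. I would make this explicit, because the later measurement-tree bound needs it to compare a single $p_{\rho_0}$ against the $P$-average of $\bar p_P$. I would also note that the success guarantee is per instance, which justifies using it for each $\rho^{(s,P)}$ separately before averaging over $s$.
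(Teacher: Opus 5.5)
Your proposal is correct and follows the paper's argument. The paper likewise turns the estimator into a threshold test on the queried construction Pauli $P$, using $|\mathrm{tr}(P\rho_0)|=0$ versus $|\mathrm{tr}(P\rho^{(\pm1,P)})|=\alpha$ together with $\eta<\alpha/2$, and then applies Le~Cam's two-point bound $\Pr(\text{success})\le\tfrac12+\tfrac12\mathrm{TV}$, which is your variational step $\mathrm{TV}\ge\bar p_P(A)-p_{\rho_0}(A)\ge1-2\delta$ written differently. Your single threshold at $\alpha/2$ and your explicit note that only the post-processing, not the leaf distribution, depends on $P$ are slightly cleaner than the paper's two-region test, but the route is the same.
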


\begin{proof}
Since the task target is $|\mathrm{tr}(P\rho)|$, the algorithm need not distinguish $s = +1$ from $s = -1$: it suffices to decide between the composite hypotheses $\rho \in \{\rho^{(+1, P)}, \rho^{(-1, P)}\}$ and $\rho = \rho_0$. For a queried Pauli that coincides with the construction parameter $P$, the targets are $|\mathrm{tr}(P \rho^{(\pm 1, P)})| = \alpha$ and $|\mathrm{tr}(P \rho_0)| = 0$. Because $\eta < \alpha/2$, the gap $\alpha - \eta > \alpha/2 > 0$ ensures that the algorithm's output $\widehat{C}$ induces a valid binary test: accept the alternative if $\widehat{C} \geq \alpha - \eta$, the null if $\widehat{C} \leq \eta$. The test succeeds whenever the estimation succeeds, with probability $\geq 1 - \delta$. By Le~Cam's two-point method, $\Pr(\text{success}) \leq \tfrac{1}{2} + \tfrac{1}{2} \mathrm{TV}(p_0, p_1)$, giving \eqref{eq:tv_necessity} after taking expectation over $P$.
\end{proof}

\subsection{Pauli--SWAP identity}

\begin{proposition}[Pauli--SWAP identity]
\label{prop:swap}
For any normalised pure state $|\varphi\rangle \in (\mathbb{C}^2)^{\otimes n_q}$,
\begin{equation}\label{eq:swap}
\mathbb{E}_{P \sim \mathrm{Unif}(\mathcal{P}^*)} \langle\varphi|P|\varphi\rangle^2 \;=\; \frac{1}{2^{n_q} + 1}.
\end{equation}
\end{proposition}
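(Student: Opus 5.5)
The plan is to compute the sum over all Paulis, including the identity, using the swap-operator expansion, and then remove the identity term. Write $d = 2^{n_q}$ and let $\mathrm{SWAP}$ act on $(\mathbb{C}^d)^{\otimes 2}$. The key identity is the Pauli expansion of the swap operator,
\begin{equation}
\mathrm{SWAP} = \frac{1}{d}\sum_{P \in \{I,X,Y,Z\}^{\otimes n_q}} P \otimes P .
\end{equation}
It follows from the single-qubit case $\mathrm{SWAP}_2 = \tfrac12(I\otimes I + X\otimes X + Y\otimes Y + Z\otimes Z)$ and the fact that the $n_q$-qubit swap factorises as a tensor product of single-qubit-pair swaps. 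The single-qubit case can be checked directly. Alternatively, Lemma~\ref{lem:bell} gives it at once: from the eigenvalue table, $\tfrac12\sum_\sigma \lambda^\sigma_{ab}$ equals $+1$ on the three symmetric Bell kets $\Phi^\pm,\Psi^+$ and $-1$ on $\Psi^-$, and that is exactly the swap.

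Next, take the expectation in the product state $|\varphi\rangle\otimes|\varphi\rangle$. Since each Pauli $P$ is Hermitian, $\langle\varphi|P|\varphi\rangle$ is real, so
\begin{equation}
\langle\varphi|P|\varphi\rangle^2 = \langle\varphi|\langle\varphi|\,P\otimes P\,|\varphi\rangle|\varphi\rangle .
\end{equation}
The swap acts trivially on $|\varphi\rangle|\varphi\rangle$, hence
\begin{equation}
\sum_{P} \langle\varphi|P|\varphi\rangle^2 = d\,\langle\varphi\varphi|\mathrm{SWAP}|\varphi\varphi\rangle = d .
\end{equation}
The identity string contributes $\langle\varphi|\varphi\rangle^2 = 1$, so the sum over $\mathcal{P}^*$ equals $d-1$. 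Dividing by $|\mathcal{P}^*| = d^2-1$ gives
\begin{equation}
\frac{d-1}{d^2-1} = \frac{1}{d+1} = \frac{1}{2^{n_q}+1},
\end{equation}
which is Eq.~\eqref{eq:swap}.

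The only step that needs care is the swap expansion. If the Bell-table route feels too slick, I would fall back on the Hilbert–Schmidt orthogonality argument: the set $\{P/\sqrt{d}\}$ is an orthonormal basis of $d\times d$ matrices, so $\sum_P P_{ij}P_{kl} = d\,\delta_{il}\delta_{jk}$, which is the matrix-element form of the swap. An independent cross-check comes from the purity identity $\mathrm{tr}(\rho^2) = \tfrac{1}{d}\sum_P \mathrm{tr}(P\rho)^2$ applied to the pure state $\rho = |\varphi\rangle\langle\varphi|$. It gives the same full-sum value $d$, and I would cite it as confirmation.
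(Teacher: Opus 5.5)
Your proof is correct and follows the paper's argument essentially line for line: you expand $\mathrm{SWAP}$ as $2^{-n_q}\sum_P P\otimes P$, evaluate it on $|\varphi\rangle^{\otimes 2}$ to get the full sum $2^{n_q}$, remove the identity term, and divide by $4^{n_q}-1$. The only addition is that you justify the Pauli completeness identity, via the Bell eigenvalue table or Hilbert--Schmidt orthonormality, whereas the paper simply invokes it.
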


\begin{proof}
The Pauli completeness identity reads $\sum_{P \in \{I,X,Y,Z\}^{\otimes n_q}} P \otimes P = 2^{n_q}\, \mathrm{SWAP}_{2^{n_q}}$, where $\mathrm{SWAP}_d$ exchanges two $d$-dimensional registers. Therefore

\begin{equation}
\begin{split}
\sum_P \langle\varphi|P|\varphi\rangle^2
  &= \sum_P \mathrm{tr}\!\left[(P \otimes P)(|\varphi\rangle\langle\varphi|)^{\otimes 2}\right] \\
  &= 2^{n_q}\, \mathrm{tr}\!\left[\mathrm{SWAP}_{2^{n_q}}(|\varphi\rangle\langle\varphi|)^{\otimes 2}\right] \\
  &= 2^{n_q},
\end{split}
\end{equation}
using $\mathrm{tr}[\mathrm{SWAP}_d (\rho \otimes \rho)] = \mathrm{tr}(\rho^2)$ and $\rho = |\varphi\rangle\langle\varphi|$ pure. Subtracting the $P = I^{\otimes n_q}$ term (contribution $1$) and dividing by $|\mathcal{P}^*| = 4^{n_q} - 1$,
\begin{equation}
\mathbb{E}_{P \in \mathcal{P}^*} \langle\varphi|P|\varphi\rangle^2 \;=\; \frac{2^{n_q} - 1}{4^{n_q} - 1} \;=\; \frac{1}{2^{n_q} + 1}.
\end{equation}
\end{proof}

The identity shows that each single-copy measurement state $|\varphi\rangle$ has exponentially weak average sensitivity to a randomly chosen Pauli direction: $\langle\varphi|P|\varphi\rangle^2$ is $O(1/2^{n_q})$ on average over $P$.

\subsection{Total variation bound via the learning tree}

We now bound the TV distance achievable by any depth-$M$ measurement tree. Along the path from root to leaf $\ell$, the algorithm uses measurement states $|\varphi_1^\ell\rangle, \ldots, |\varphi_M^\ell\rangle$ (which may depend on previous outcomes). The leaf probabilities from \eqref{eq:leaf_prob} give
\begin{equation}\label{eq:ratio}
\frac{p_{\rho^{(s,P)}}(\ell)}{p_{\rho_0}(\ell)} \;=\; \prod_{t=1}^{M} \frac{\langle\varphi_t^\ell|\rho^{(s,P)}|\varphi_t^\ell\rangle}{\langle\varphi_t^\ell|\rho_0|\varphi_t^\ell\rangle} \;=\; \prod_{t=1}^{M} \!\left(1 + s\alpha \langle\varphi_t^\ell|P|\varphi_t^\ell\rangle\right),
\end{equation}
where we used $\langle\varphi_t^\ell|\rho_0|\varphi_t^\ell\rangle = 1/2^{n_q}$. The sign-mixed ratio is therefore
\begin{equation}
\frac{\bar{p}_P(\ell)}{p_{\rho_0}(\ell)} \;=\; \frac{1}{2}\prod_t (1 + \alpha c_t) + \frac{1}{2}\prod_t (1 - \alpha c_t),
\end{equation}
where $c_t = \langle\varphi_t^\ell|P|\varphi_t^\ell\rangle$. Applying the AM--GM inequality $(a + b)/2 \geq \sqrt{ab}$ with $(1 + \alpha c_t)(1 - \alpha c_t) = 1 - \alpha^2 c_t^2$ gives
\begin{equation}\label{eq:amgm}
\frac{\bar{p}_P(\ell)}{p_{\rho_0}(\ell)} \;\geq\; \prod_{t=1}^{M} \sqrt{1 - \alpha^2 c_t^2}.
\end{equation}
By the standard identity $\mathrm{TV}(p, q) = 1 - \sum_\ell \min(p(\ell), q(\ell))$ together with $\min(p_{\rho_0}(\ell), \bar{p}_P(\ell)) \geq p_{\rho_0}(\ell)\prod_t \sqrt{1 - \alpha^2 c_t^2}$ (since the product is $\leq 1$),
\begin{equation}\label{eq:tv_upper}
\mathbb{E}_P\, \mathrm{TV}(p_{\rho_0}, \bar{p}_P) \;\leq\; 1 - \mathbb{E}_P\!\left[\sum_\ell p_{\rho_0}(\ell) \prod_{t=1}^M \sqrt{1 - \alpha^2 c_t^2}\right].
\end{equation}

\begin{lemma}[Logarithmic linearisation]
\label{lem:log-lin}
For all $x \in [0, 0.81]$, $\ln(1 - x) \geq -2.1\, x$.
\end{lemma}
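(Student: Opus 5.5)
The claim is that $\ln(1-x) \ge -2.1\,x$ on $[0,0.81]$. I will compare the two sides through a single auxiliary function and use concavity, so that only the endpoints need checking. Define
\[
f(x) = \ln(1-x) + 2.1\,x ,
\]
on the closed interval $[0,0.81]$. The function is well defined there because $1-x \ge 0.19 > 0$. Its second derivative is $f''(x) = -1/(1-x)^2 < 0$, so $f$ is strictly concave on the interval. A concave function on a closed interval attains its minimum at one of the endpoints. The statement therefore reduces to showing $f(0) \ge 0$ and $f(0.81) \ge 0$.

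At the left endpoint, $f(0) = 0$. At the right endpoint,
\[
f(0.81) = \ln(0.19) + 1.701 .
\]
I will bound $\ln(0.19)$ from below rigorously, without relying on a decimal approximation. Since $0.19 > e^{-1.7}$ is equivalent to $e^{1.7} > 1/0.19 \approx 5.263$, I will establish that exponential inequality. One route is $e^{1.7} = e \cdot e^{0.7}$ with $e > 2.718$ and $e^{0.7} > 1 + 0.7 + 0.245 + 0.0572 = 2.0022$ from the Taylor series. Together these give $e^{1.7} > 5.44 > 5.27$. Hence $\ln(0.19) > -1.7$, and so $f(0.81) > 0.001 > 0$.

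Concavity then gives $f \ge \min\{f(0), f(0.81)\} = 0$ on all of $[0,0.81]$, which is exactly $\ln(1-x) \ge -2.1\,x$.

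The main obstacle is only the numerical margin at $x = 0.81$, which is about $0.04$ (since $\ln 0.19 \approx -1.6607$). The one step requiring care is bounding $e^{1.7}$ explicitly rather than quoting a decimal value. I also note why the lemma is stated on this interval: it will later be applied with $x = \alpha^2 c_t^2 \le \alpha^2 = 0.81$, using $|c_t| \le 1$.
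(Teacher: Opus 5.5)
Your proof is correct and follows the paper's argument: the same auxiliary function $f(x)=\ln(1-x)+2.1x$, strict concavity from $f''(x)=-(1-x)^{-2}<0$, and non-negativity at the two endpoints. The one difference is that you certify $f(0.81)>0$ rigorously through $e^{1.7}>1/0.19$, whereas the paper only quotes $f(0.81)\approx 0.040$. A small quibble: the cubic Taylor partial sum for $e^{0.7}$ is $2.00217\ldots$, which is below $2.0022$, so state the bound as $e^{0.7}>2.002$; this still gives $e^{1.7}>5.44$.
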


\begin{proof}
Let $f(x) = \ln(1 - x) + 2.1x$, so $f(0) = 0$, $f'(x) = -(1 - x)^{-1} + 2.1$, with critical point $x^\star = 1 - 1/2.1 \approx 0.524 \in (0, 0.81)$. The second derivative $f''(x) = -(1 - x)^{-2} < 0$, so $f$ is strictly concave on $[0, 0.81]$, and $f(0.81) = \ln(0.19) + 2.1 \cdot 0.81 \approx 0.040 > 0$. By concavity and non-negativity at both endpoints, $f \geq 0$ on $[0, 0.81]$.
\end{proof}

\begin{proposition}[Per-Pauli expectation of the leaf product]
\label{prop:per-pauli}
With $\alpha = 0.9$ (so $\alpha^2 = 0.81$), the leaf product satisfies
\begin{equation}\label{eq:tv_bound_final}
\mathbb{E}_{P \sim \mathrm{Unif}(\mathcal{P}^*)} \prod_{t=1}^M \sqrt{1 - \alpha^2 c_t^2} \;\geq\; \exp\!\left(-\frac{2.1\, \alpha^2\, M}{2(2^{n_q} + 1)}\right).
\end{equation}
\end{proposition}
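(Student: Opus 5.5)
The plan is to pass the product into an exponential, linearise the logarithm with Lemma~\ref{lem:log-lin}, and then move the expectation over $P$ into the exponent with Jensen's inequality. For a fixed leaf $\ell$ with measurement kets $|\varphi_1^\ell\rangle,\ldots,|\varphi_M^\ell\rangle$, write
\begin{equation*}
\prod_{t=1}^M \sqrt{1-\alpha^2 c_t^2} = \exp\!\Bigl(\tfrac12 \sum_{t=1}^M \ln(1-\alpha^2 c_t^2)\Bigr).
\end{equation*}
Since $c_t = \langle\varphi_t^\ell|P|\varphi_t^\ell\rangle \in [-1,1]$ for a Hermitian unitary $P$, the argument $x_t = \alpha^2 c_t^2$ lies in $[0,0.81]$ when $\alpha = 0.9$. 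This is exactly the range of Lemma~\ref{lem:log-lin}, which gives $\ln(1-x_t) \geq -2.1\,\alpha^2 c_t^2$ termwise. The product is therefore bounded below by $\exp\bigl(-\tfrac{2.1\alpha^2}{2}\sum_t c_t^2\bigr)$.

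Next I would take the expectation over $P \sim \mathrm{Unif}(\mathcal{P}^*)$. Because $\exp$ is convex, Jensen gives
\begin{equation*}
\mathbb{E}_P \exp\!\Bigl(-\tfrac{2.1\alpha^2}{2}\textstyle\sum_t c_t^2\Bigr) \geq \exp\!\Bigl(-\tfrac{2.1\alpha^2}{2}\textstyle\sum_t \mathbb{E}_P c_t^2\Bigr).
\end{equation*}
By Proposition~\ref{prop:swap}, each $\mathbb{E}_P c_t^2 = \mathbb{E}_P \langle\varphi_t^\ell|P|\varphi_t^\ell\rangle^2$ equals $1/(2^{n_q}+1)$ for every normalised ket. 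The sum is therefore $M/(2^{n_q}+1)$, which yields~\eqref{eq:tv_bound_final}.

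The main subtlety, and the step I would check most carefully, is adaptivity: whether the kets $|\varphi_t^\ell\rangle$ may depend on $P$. They do not. In the tree of Definition~\ref{def:classical}, the path to a given leaf $\ell$ fixes the nodes, hence the measurement kets, purely through the tree structure and the recorded outcomes, never through the unknown state. So for fixed $\ell$ the kets are deterministic and $P$ is the only random quantity, which is what makes Proposition~\ref{prop:swap} apply term by term. The bound thus holds uniformly in $\ell$.

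Finally, multiplying by $p_{\rho_0}(\ell)$ and summing over $\ell$ (using that $p_{\rho_0}$ does not depend on $P$) gives the same lower bound for the averaged quantity in~\eqref{eq:tv_upper}. Combined with Proposition~\ref{prop:reduction}, this forces $M = \Omega(2^{n_q})$.
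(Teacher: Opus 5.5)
Your proof is correct and uses the same ingredients as the paper: the log-linearisation lemma, Jensen's inequality for the convex exponential, and the Pauli--SWAP identity applied leaf by leaf, with the path kets fixed by $\ell$ so the bound is uniform in $\ell$. The only difference is the order of steps. You linearise the logarithm pointwise before applying Jensen, whereas the paper applies Jensen to $e^{g(P)}$ first and then linearises inside $\mathbb{E}_P$; both give the identical bound.
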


\begin{proof}
For each fixed leaf $\ell$ and corresponding measurement states $|\varphi_t^\ell\rangle$, define $g(P) = \tfrac{1}{2} \sum_{t=1}^M \ln(1 - \alpha^2 c_t(P)^2)$. Then $\prod_t \sqrt{1 - \alpha^2 c_t^2} = e^{g(P)}$. By Jensen's inequality applied to the convex map $x \mapsto e^x$,
\begin{equation}
\mathbb{E}_P\, e^{g(P)} \;\geq\; e^{\mathbb{E}_P\, g(P)}.
\end{equation}
By Lemma~\ref{lem:log-lin} (using $\alpha^2 c_t^2 \leq \alpha^2 = 0.81$),
\begin{equation}
\mathbb{E}_P \ln(1 - \alpha^2 c_t^2) \;\geq\; -2.1\, \alpha^2\, \mathbb{E}_P\, c_t^2 \;=\; -\frac{2.1\, \alpha^2}{2^{n_q} + 1},
\end{equation}
where the second equality is Proposition~\ref{prop:swap} applied to the unit vector $|\varphi_t^\ell\rangle$. Summing over $t$ and dividing by $2$ gives $\mathbb{E}_P\, g(P) \geq -2.1\, \alpha^2\, M / [2(2^{n_q} + 1)]$, and substituting back gives \eqref{eq:tv_bound_final}. The bound is independent of $\ell$, so it survives averaging over leaves with weights $p_{\rho_0}(\ell)$.
\end{proof}

\subsection{Final assembly}

\begin{theorem}[Classical lower bound]
\label{thm:lower}
Any adaptive single-copy protocol for the \emph{post hoc} Pauli read-out task on $n_q$ qubits with accuracy $\eta < \alpha/2$ and confidence $1 - \delta$ requires
\begin{equation}\label{eq:lower}
M_C \;\geq\; \frac{2(2^{n_q} + 1)}{2.1\, \alpha^2}\, \ln\frac{1}{2\delta} \;=\; \Omega\!\left(2^{n_q}\right) \;=\; \Omega\!\left(2^{kq}\right)
\end{equation}
copies. For $\delta = 0.2$ and $\alpha = 0.9$, this gives $M_C \geq 1.077\, (2^{n_q} + 1)$.
\end{theorem}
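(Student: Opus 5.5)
The plan is to chain the two halves already established in Appendix~\ref{sec:lower}: the necessity bound of Proposition~\ref{prop:reduction} and the tree-based TV upper bound \eqref{eq:tv_upper} combined with Proposition~\ref{prop:per-pauli}. First, fix an arbitrary adaptive single-copy protocol of depth $M_C$, described by a tree $\mathcal{T}$ as in Definition~\ref{def:classical}. The key structural point is that in the \emph{post hoc} task the tree is built before the query is revealed. Hence the leaf distributions $p_{\rho_0}(\ell)$ and $p_{\rho^{(s,P)}}(\ell)$ are determined by $\mathcal{T}$ and the input state alone. The query $P$ enters only through the final classical post-processing $\widehat{C}$.

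Under this setup, Proposition~\ref{prop:reduction} gives the lower bound $\mathbb{E}_P\,\mathrm{TV}(p_{\rho_0},\bar p_P) \ge 1-2\delta$. On the other side, I would exchange $\mathbb{E}_P$ and $\sum_\ell$ in \eqref{eq:tv_upper}, which is legitimate because both are finite sums and $p_{\rho_0}(\ell)$ does not depend on $P$. Then apply Proposition~\ref{prop:per-pauli} leaf by leaf. Its bound is uniform in $\ell$, and $\sum_\ell p_{\rho_0}(\ell)=1$, so this yields
\[
\mathbb{E}_P\,\mathrm{TV}(p_{\rho_0},\bar p_P)\;\le\;1-\exp\!\left(-\frac{2.1\,\alpha^2 M_C}{2(2^{n_q}+1)}\right).
\]

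Combining the two inequalities gives $\exp\!\bigl(-2.1\alpha^2 M_C/[2(2^{n_q}+1)]\bigr)\le 2\delta$. Since $\delta<1/2$ by Definition~\ref{def:task}, we have $\ln(1/(2\delta))>0$. Taking logarithms and rearranging then gives exactly \eqref{eq:lower}. Because $\alpha=0.9$ and $\delta$ are fixed constants, this is $\Omega(2^{n_q})$, and it equals $\Omega(2^{kq})$ via $n_q=kq$. For the numerical instance, I would evaluate $\frac{2}{2.1\cdot 0.81}\ln(2.5)\approx 1.1757\times 0.9163\approx 1.077$.

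The step needing the most care is the interchange of the expectation over $P$ with the leaf sum. That interchange is valid only because the measurement tree, and therefore $p_{\rho_0}$, is independent of $P$. This is precisely where the \emph{post hoc} nature of the task and the restriction to single-copy, classical-memory strategies enter. A second check is that the hypotheses of Proposition~\ref{prop:per-pauli} hold on every path: Lemma~\ref{lem:log-lin} requires $\alpha^2 c_t^2\le 0.81$, which follows from $|c_t|\le 1$ for unit measurement kets. The reduction also requires $\eta<\alpha/2$, which is assumed in the statement.
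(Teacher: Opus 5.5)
Your proposal is correct and follows the paper's proof essentially step for step: the leafwise, $\ell$-uniform bound of Proposition~\ref{prop:per-pauli}, summed against $p_{\rho_0}$, turns \eqref{eq:tv_upper} into $\mathbb{E}_P\,\mathrm{TV}(p_{\rho_0},\bar p_P)\le 1-\exp\!\bigl(-2.1\alpha^2 M_C/[2(2^{n_q}+1)]\bigr)$. This is then compared with the $1-2\delta$ necessity bound of Proposition~\ref{prop:reduction} and rearranged, and your constant $1.077$ matches. One small sharpening: exchanging two finite sums needs no justification, and what the $P$-independence of the pre-query tree actually secures is that the kets $|\varphi_t^\ell\rangle$ along each path are fixed, so Proposition~\ref{prop:swap} applies to them inside Proposition~\ref{prop:per-pauli}.
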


\begin{proof}
Combining \eqref{eq:tv_upper} with Proposition~\ref{prop:per-pauli} and using $\sum_\ell p_{\rho_0}(\ell) = 1$,
\begin{equation}
\mathbb{E}_P\, \mathrm{TV}(p_{\rho_0}, \bar{p}_P) \;\leq\; 1 - \exp\!\left(-\frac{2.1\, \alpha^2\, M_C}{2(2^{n_q} + 1)}\right).
\end{equation}
Combining with the necessity condition \eqref{eq:tv_necessity} from Proposition~\ref{prop:reduction},
\footnotesize
\begin{equation}
1 - 2\delta \;\leq\; 1 - \exp\!\left(-\frac{2.1\, \alpha^2\, M_C}{2(2^{n_q} + 1)}\right)
\;\Longleftrightarrow\;
\exp\!\left(-\frac{2.1\, \alpha^2\, M_C}{2(2^{n_q} + 1)}\right) \leq 2\delta,
\end{equation}
\normalsize
which on taking logarithms rearranges to \eqref{eq:lower}.
\end{proof}

\begin{remark}
The proof carries over verbatim to mixed-state POVM elements $E_t \succeq 0$ with $\mathrm{tr}(E_t) \leq 1$, since the bound depends only on the squared overlap $c_t^2 \leq 1$ and not on any rank-one assumption beyond convexity of the measurement-tree representation. Noisy classical post-processing with randomised internal states is similarly absorbed into the mixture.
\end{remark}

\section{Read-out of trained Q-Priors}
\label{sec:numdemo}

This section documents the \emph{post hoc} Pauli read-out experiments on trained Q-Priors that produce Fig.~2 in the main text. The protocol has two ends: a noiseless classical simulation that establishes the $M_Q$ baseline, and an IQM superconducting-hardware execution (Garnet and Emerald) at the same accuracy and shot budget. The hardware accounting and noise characterisation are deferred to Appendix~\ref{sec:hardware}; here we report the generator, the training loss, and the $M_Q$ criterion.

\subsection{Generator and training}

The Q-Prior generator is a brick-wall hardware-efficient circuit on $n_q = kq$ qubits ($k = 2$). Each layer applies single-qubit $R_Y$ and $R_Z$ rotations on every qubit, followed by parametrised $R_{ZZ}$ entanglers along the nearest-neighbour chain together with cross-block edges $(i, i+q)$ for $i = 0, \ldots, q-1$. The cross-block edges carry the inter-site correlation across the $k = 2$ partition $A|B$ and are essential for non-factorisable joint distributions. With $L = 5$ layers, the parameter count is $L \cdot \left(2 n_q + (n_q - 1) + n_q/2\right)$, giving $65$, $100$, $135$, $170$, $205$, $240$, $275$ at $n_q = 4, 6, 8, 10, 12, 14, 16$ respectively. The same generator family is used in the ERA5 case study of Section~\ref{sec:era5}.

Training uses paired ERA5 $Z_{500}$ samples drawn at the same time at fixed displacement $r = (0,4)$ from the latitude band $|\mathrm{lat}| \leq 45^\circ$, with $B = 2^q$ bins per site sharing global bin edges. The loss
\begin{equation}\label{eq:S4_loss}
\mathcal{L}(\theta) \;=\; \|p_\theta - p_{\mathrm{data}}\|_2^2 + \lambda_{\mathrm{cov}} \!\left(\frac{\Sigma_Q(\theta) - \Sigma_{\mathrm{data}}}{\Sigma_{\mathrm{data}}}\right)^{\!2}
\end{equation}
combines an $L_2$ joint-distribution term with a relative covariance term ($\lambda_{\mathrm{cov}} = 0.25$). Parameter-shift gradients feed an Adam optimiser at $\mathrm{lr} = 0.15$. The main sweep configuration ($q = 5$, $n_q = 10$, $200$ training steps) reaches final loss $1.05 \times 10^{-3}$ with covariance relative error $0.05\%$ (data correlation $+0.96$, learned $+0.59$); the other $n_q$ use $80$ steps at the same hyperparameters. The hardware run at $n_q = 6$ on Garnet (Fig.~2C) uses a separately trained $q = 3$, $L = 4$ prior with covariance relative error $0.4\%$ (data correlation $+0.91$ and learned correlation $+0.71$ at this binning, matching Fig.~2C). The data correlation differs from the $+0.96$ of the $q = 5$ main sweep because the $B = 2^q$ binning is coarser ($B = 8$ versus $B = 32$ bins per site).

Extended training ($1000$ steps) raises the learned joint correlation to $+0.662$ at $L = 5$ (covariance relative error $0.011$); deeper ans\"atze give $+0.564$ at $L = 7$ (covariance error $3.3 \times 10^{-4}$) and $+0.535$ at $L = 8$ (covariance error $0.012$). The data correlation is $+0.96$ throughout, and the scalar covariance target $\Sigma_Q$, the quantity entering the case-study regulariser of Eq.~\eqref{eq:loss}, is matched to within $\sim 1\%$ across the three depths. The representation claim of the case study is thus stated at the level of the matched covariance target.

\begin{remark}
Deeper ans\"atze ($L = 7$ and $L = 8$) did not improve the learned joint correlation at $n_q = 10$; capturing the full joint distribution beyond the second moment is a direct extension via richer generators.
\end{remark}

\subsection{Post hoc Bell read-out}

For each $n_q$, two copies of the trained Q-Prior state are prepared and measured pairwise in the Bell basis (Section~\ref{sec:upper}), and Pauli expectations are reconstructed in post-processing from the same Bell record. The simulation record contains $1.2 \times 10^5$ Bell pairs per $n_q$; the hardware record contains $2 \times 10^4$ shots per circuit, archived for offline re-analysis (Section~\ref{sec:hardware}). The test-Pauli set comprises full-weight non-diagonal strings in $\{X,Y,Z\}^{\otimes n_q}$ together with low-weight cross-block strings $X_i X_{i+q}$ and $X_i Y_{i+q}$, eight observables per $n_q$ that span full-weight and cross-block non-diagonal correlations within the available shot budget; the read-out is therefore entirely in the non-diagonal regime where the full-Pauli lower bound of Theorem~\ref{thm:lower} is in force.

The empirical copy-pair budget $M_Q(P)$ for a Pauli $P$ is the smallest grid size $S$ such that the Bell estimator satisfies $|\widehat{C} - |\mathrm{tr}(P\rho)|| \leq \eta$ with empirical probability $\geq 1 - \delta$ over $40$ resampling trials of size $S$ drawn from the Bell record (pass criterion: $\geq 36/40$ trials within $\eta$). The reported $M_Q$ at each $n_q$ is the median across the test Paulis, with interquartile range indicating per-Pauli spread.

The accuracy and confidence are fixed at $\eta = 0.2$ and $\delta = 0.1$. The choice $\eta = 0.2$ sits above the device noise floor measured at $n_q = 6$ on Garnet (Section~\ref{sec:hardware}). The scaling separation $M_C / M_Q \sim 3^{n_q} \eta^2$ depends on $\eta$ only through the prefactor and the $M_Q$ bound; the exponential dependence on $n_q$ is intrinsic to the single-copy lower bound (Theorem~\ref{thm:lower}).

\subsection{Single-copy comparison protocol}

The classical-shadow baseline measures one copy of the trained state at a time in a uniformly random Pauli direction $\sigma \in \{X,Y,Z\}^{\otimes n_q}$ and applies the unbiased estimator with weight $3^{n_q}$ on matched copies. Its analytic shot count $M_C(n_q) = 3^{n_q}/\eta^2$ is the comparison line plotted in Fig.~2D. The protocol is in the adaptive single-copy class of Theorem~\ref{thm:lower}; the exponential dependence on $n_q$ cannot be removed within this class (see Result~\ref{res:readout}).

Table~\ref{tab:numdemo} reports the simulation and hardware $M_Q$ medians and interquartile ranges for $n_q \in \{4, 6, 8, 10, 12, 14, 16\}$, together with the analytic single-copy cost $M_C = 3^{n_q}/\eta^2$. Across the test-Pauli set the simulation median $M_C/M_Q$ reaches $1.3 \times 10^6$ at $n_q = 16$ (sim, ERA5; Fig.~2D); the hardware median ratio sits in the same range ($3.6 \times 10^6$, hw, ERA5), reflecting the $\nq$-independent Bell scaling jointly with the depolarising bias on small expectations detailed in Appendix~\ref{sec:hardware} (Fig.~2B).

\begin{table}[htbp]
\centering
\footnotesize
\setlength{\tabcolsep}{6pt}
\caption{Empirical Bell copy-pair counts $M_Q$ at $\eta = 0.2$, $\delta = 0.1$, in noiseless simulation and on IQM hardware. Medians (interquartile ranges) across the eight test Paulis per $n_q$. $M_C = 3^{n_q}/\eta^2$ is the single-copy classical-shadow shot count. The hardware $M_Q$ at large $n_q$ (e.g.\ $n_q = 14, 16$) sits below the simulation $M_Q$; this is not improved hardware accuracy but an artefact of the test-Pauli set: many high-weight Paulis have small noiseless expectations at large $n_q$, and the depolarising contraction on hardware drives them closer to zero, so they pass the $\eta = 0.2$ threshold at low shot count even though the hardware deviates more from the noiseless value than the simulation does. The hardware $M_Q$ on physically meaningful, large-magnitude observables tracks the simulation; see Appendix~\ref{sec:hardware} for the noise floor and offline mitigation.}
\label{tab:numdemo}
\begin{tabular}{@{}rrll@{}}
\toprule
$n_q$ & $M_C$ & sim.\ $M_Q$ (IQR) & hw.\ $M_Q$ (IQR), device \\
\midrule
$4$  & $2.0{\times}10^3$ & $500$ ($88$--$1{,}000$)    & $100$ ($100$--$350$), Garnet \\
$6$  & $1.8{\times}10^4$ & $800$ ($350$--$1{,}600$)   & $600$ ($175$--$5{,}600$), Garnet \\
$8$  & $1.6{\times}10^5$ & $600$ ($350$--$1{,}000$)   & $400$ ($200$--$800$), Garnet \\
$10$ & $1.5{\times}10^6$ & $800$ ($400$--$1{,}000$)   & $800$ ($700$--$2.0{\times}10^4$), Garnet \\
$12$ & $1.3{\times}10^7$ & $800$ ($400$--$800$)       & $300$ ($200$--$1{,}100$), Emerald \\
$14$ & $1.2{\times}10^8$ & $600$ ($400$--$1{,}000$)   & $1{,}600$ ($800$--$2.0{\times}10^4$), Emerald \\
$16$ & $1.1{\times}10^9$ & $800$ ($400$--$1{,}600$)   & $300$ ($100$--$700$), Emerald \\
\bottomrule
\end{tabular}
\end{table}

\section{Turbulent channel-flow directional-coherence read-out}
\label{sec:tcf}

This section documents the turbulent channel-flow (TCF) case study (Case~1, main text Section~\ref{sec:case1}). The protocol mirrors Section~\ref{sec:numdemo}: a noiseless classical simulation that establishes the $M_Q$ baseline and an IQM superconducting-hardware execution (Garnet for $n_q \leq 10$, Emerald for $n_q \in \{12, 14, 16\}$) at the same accuracy ($\eta = 0.2$) and confidence ($\delta = 0.1$). Hardware accounting and noise characterisation are folded into Appendix~\ref{sec:hardware}; here we document the dataset, the encoding identity, the read-out target, the empirical $M_Q$ budget, and the rollout-panel sourcing.

\subsection{Dataset and encoding}

The dataset is the three-dimensional turbulent channel-flow simulation of~\cite{wang2026qiml} (Reynolds number, domain size and sampling protocol as in the reference). Working in the homogeneous streamwise direction, we extract pairs of sites at separation $r$ from snapshots after subtracting the mean wall-normal velocity profile, so the encoded phase is the fluctuation phase rather than the mean-flow phase. The velocity-direction phase at each site is
\begin{equation}
\theta(\mathbf{x}, t) \;=\; \arg\bigl(v_x(\mathbf{x}, t) + i\, v_y(\mathbf{x}, t)\bigr) \in (-\pi, \pi].
\end{equation}
Each site is encoded one-to-one into a qubit phase,
\begin{equation}
|\psi(\theta)\rangle \;=\; \tfrac{1}{\sqrt{2}} \bigl( |0\rangle + e^{i\theta}|1\rangle \bigr),
\end{equation}
so the read-out target is the two-point directional coherence
\begin{equation}\label{eq:Cr}
C(r) \;=\; \bigl|\langle e^{i(\theta_A - \theta_B)} \rangle\bigr|,
\end{equation}
where the average runs over snapshots and over translations along the homogeneous direction. The identity
\begin{equation}\label{eq:bell-identity}
\langle \sigma^+_A \sigma^-_B \rangle \;=\; \tfrac{1}{4} \langle e^{i(\theta_A - \theta_B)} \rangle \;=\; \tfrac{1}{4}\, C(r)\, e^{i\phi(r)}
\end{equation}
relates the off-diagonal Bell observable to $C(r)$ directly; the diagonal $Z$-type magnitude statistics, by contrast, carry no information about $C(r)$.

The directional coherence decays from near unity at small streamwise separation to a decorrelated control level of $0.023$ at large $r$ (Table~\ref{tab:Cr-vals}). The Stage-1 separation used in the figure is $r = 32$ lattice units, at which $C(r) = 0.186$ is well above the control level and remains within the dynamic range of the two-copy Bell read-out at $\eta = 0.2$.

\begin{table}[htbp]
\centering
\small
\caption{Directional coherence $C(r)$ versus streamwise separation $r$ (lattice units) after subtracting the mean wall-normal profile; decorrelated control level $0.023$.}
\label{tab:Cr-vals}
\begin{tabular}{@{}rl@{}}
\toprule
$r$ & $C(r)$ \\
\midrule
$1$  & $\approx 1.00$ \\
$8$  & $0.905$ \\
$16$ & $0.646$ \\
$32$ & $0.186$ \\
$64$ & $0.125$ \\
\midrule
control (decorrelated) & $0.023$ \\
\bottomrule
\end{tabular}
\end{table}

\subsection{Post hoc Bell read-out and \texorpdfstring{$M_Q$}{M\_Q} budget}

For each $n_q$, two copies of the encoded state are prepared and used as the two-copy register on $\rho \otimes \rho$. The directional coherence is read out through the two-copy identity
\begin{equation}
\label{eq:tcf-twocopy}
\bigl|\mathrm{tr}(\rho\, \sigma^+_A \sigma^-_B)\bigr|^{2} \;=\; \mathrm{tr}\!\left[(\rho \otimes \rho)\, \bigl(\sigma^+_A \sigma^-_B \otimes \sigma^-_A \sigma^+_B\bigr)\right],
\end{equation}
so that $|C(r)| = 4\bigl|\mathrm{tr}(\rho\, \sigma^+_A \sigma^-_B)\bigr|$ is obtained as a two-copy expectation on $\rho \otimes \rho$ in the same two-copy register used for the Bell read-out. Because $\sigma^+_A \sigma^-_B$ is a fixed two-local (weight-$2$) operator, this expectation is estimated to accuracy $\eta$ with a copy count independent of $n_q$ for both single-copy and two-copy measurement. The directional coherence is therefore a named physical anchor for the non-diagonal read-out within the Bell primitive's full-Pauli class; the same primitive simultaneously supports the \emph{post hoc} full-Pauli read-out task on which Result~\ref{res:readout}'s separation operates (an adversarially chosen Pauli over the full $\{I,X,Y,Z\}^{\otimes n_q}$ set), with a higher-weight named physical observable (for example the higher-order aggregated phase invariant discussed in Section~\ref{sec:discussion}) the natural bridge between the two. Figure~\ref{fig:tcf1}D validates that this read-out is faithful: the two-copy estimate $4\bigl|\mathrm{tr}(\rho\, \sigma^+_A \sigma^-_B)\bigr|$ tracks the exact $C(r)$ within $\eta$ across separations in simulation, and within the measured noise floor on hardware.

The Bell-basis measurement of Section~\ref{sec:upper} is also used to reconstruct the per-$n_q$ test-Pauli set in post-processing from the same two-copy record. The simulation record contains a budget of Bell pairs matched to the panel-C series of Fig.~\ref{fig:tcf1}; the hardware record contains $2 \times 10^4$ shots per circuit, archived for offline re-analysis. The test-Pauli set per $n_q$ comprises low-weight cross-block directional-coherence Paulis $X_i X_{i+\ell}$, $X_i Y_{i+\ell}$, $Y_i X_{i+\ell}$, $Y_i Y_{i+\ell}$ (where $\ell$ encodes the streamwise separation) together with full-weight chain Paulis; the reported $M_Q$ at each $n_q$ is the median across this set, with interquartile range indicating per-Pauli spread.

The empirical $M_Q$ criterion and the analytic single-copy cost $M_C = 3^{n_q}/\eta^2$ are identical to those of Section~\ref{sec:numdemo}. The validated, physically meaningful quantity here is the pairwise directional coherence $C(r)$, which the Bell read-out recovers within $\eta$ for the large-coherence values (Fig.~\ref{fig:tcf1}D). The full-weight chain Pauli has vanishing exact magnitude at large $n_q$ ($\sim 10^{-5}$ at $n_q = 16$) and is noise-floor-limited on hardware, the same high-weight caveat already stated for the ERA5 read-out in Section~\ref{sec:hardware}; the quantum--classical copy-measurement separation is a read-out-cost statement that holds regardless of the magnitude of the observable.

Table~\ref{tab:tcf-MQ} reports the IQM hardware $M_Q$ medians and interquartile ranges for $n_q \in \{4, 6, 8, 10, 12, 14, 16\}$, together with the analytic single-copy cost $M_C$ and the ratio $M_C/M_Q$. The hardware ratio reaches $2.2 \times 10^6$ (hw, TCF) at $n_q = 16$ on Emerald.

\begin{table}[htbp]
\centering
\small
\setlength{\tabcolsep}{6pt}
\caption{Empirical Bell copy-pair counts $M_Q$ for the turbulent channel-flow directional-coherence read-out at $\eta = 0.2$, $\delta = 0.1$. Medians (interquartile ranges) across the per-$n_q$ test-Pauli set on IQM hardware. $M_C = 3^{n_q}/\eta^2$ is the single-copy classical-shadow shot count.}
\label{tab:tcf-MQ}
\begin{tabular}{@{}rlrlr@{}}
\toprule
$n_q$ & device & $M_C$ & hw.\ $M_Q$ (IQR) & $M_C/M_Q$ \\
\midrule
$4$  & Garnet  & $2.0{\times}10^3$ & $300$ ($87$--$1{,}000$)    & $6.8$ \\
$6$  & Garnet  & $1.8{\times}10^4$ & $250$ ($50$--$800$)        & $73$ \\
$8$  & Garnet  & $1.6{\times}10^5$ & $800$ ($400$--$1{,}000$)   & $205$ \\
$10$ & Garnet  & $1.5{\times}10^6$ & $600$ ($325$--$800$)       & $2.5{\times}10^3$ \\
$12$ & Emerald & $1.3{\times}10^7$ & $800$ ($800$--$1{,}600$)   & $1.7{\times}10^4$ \\
$14$ & Emerald & $1.2{\times}10^8$ & $600$ ($175$--$800$)       & $2.0{\times}10^5$ \\
$16$ & Emerald & $1.1{\times}10^9$ & $500$ ($200$--$1{,}600$)   & $2.2{\times}10^6$ \\
\bottomrule
\end{tabular}
\end{table}

\subsection{Downstream Q-Prior training and field-statistics evaluation}
\label{sec:tcf-downstream}

Beyond the read-out experiment above, which operates on states prepared by direct phase encoding without any generator training, we train two generative Q-Priors on the same channel-flow velocity statistics, a single-site ($k = 1$) and a multi-site ($k = 1 + 2$), and use them as soft constraints on a classical Koopman rollout (the model of Ref.~\cite{wang2026qiml}, same Q-Prior generator family as Section~\ref{sec:numdemo}) to test their downstream value, in a three-arm ablation that isolates the contribution of the multi-site $k = 2$ structure: (a)~no Q-Prior; (b)~single-site ($k = 1$) Q-Prior, with two terms (an $L_2$ and a Kullback--Leibler term) constraining the rolled-out global speed PDF against the Q-Prior single-site target; (c)~$1$st$+2$nd-order Q-Prior, adding an analogous $L_2$ + Kullback--Leibler constraint on the rolled-out two-point joint distribution at streamwise separation $r = 8$ in $q = 5$ bins per site against the Q-Prior multi-site target. The Koopman architecture, optimiser (Adam at learning rate $10^{-4}$ with step decay), batch size, data split and total epoch count are identical across arms; arm~(c) is initialised from arm~(b)'s checkpoint.

The second-order constraint is enabled at epoch~$21$ with a linear warm-up of its weight from zero to $10^{-2}$ over twenty epochs, applied only to rollout steps $6$--$10$ of the ten-step autoregressive rollout with quadratic step weighting; a trust-region rescaling caps its gradient norm at one tenth of the first-order forward-loss gradient norm, preventing destabilisation while the prior target is still loose.

\begin{table}[htbp]
\centering
\footnotesize
\begin{tabular}{l r r}
\toprule
Milestone & Epoch & One-step agreement \\
\midrule
Start (loaded $k = 1$ checkpoint) & $1$  & $81.1\%$ \\
$k = 1$ converged (before $k = 2$ on) & $20$ & $89.5\%$ \\
$k = 1 + 2$, best                       & $98$ & $\mathbf{93.4\%}$ \\
\bottomrule
\end{tabular}
\caption{Arm~(c) training progression. Held-out one-step test agreement (defined as $(1 - L_p) \times 100$ on the held-out set). The second-order Q-Prior constraint is enabled at epoch~$21$.}
\label{tab:tcf-agreement}
\end{table}

Rolled-out field statistics, averaged over the first $300$ frames per arm, are compared against the DNS reference at the same resolution. The spectrum metric is the $\log_{10}$ root-mean-square error of the $16$-bin radial energy spectrum; the speed-PDF metric is the Kolmogorov--Smirnov distance between the pooled global speed PDFs; the two-point joint correlation is the Pearson correlation of the soft joint distribution at $r = 8$ with $q = 5$ bins per site, computed on the rollout fields and compared against the DNS value.

\begin{table*}[htbp]
\centering
\footnotesize
\begin{tabular}{l c c c}
\toprule
Arm & Spectrum $\log$-RMSE $\downarrow$ & Speed-PDF KS $\downarrow$ & Joint corr at $r = 8$ \\
\midrule
(a) no prior                                 & $0.203$ & $0.055$  & $0.937$ \\
(b) $k = 1$ Q-Prior                          & $0.061$ & $0.015$  & $0.861$ \\
(c) $k = 1 + 2$ Q-Prior                      & $\mathbf{0.051}$ & $\mathbf{0.003}$  & $\mathbf{0.851}$ \\
\bottomrule
\end{tabular}
\caption{Field-statistics agreement with DNS for the three Koopman-rollout arms of Fig.~\ref{fig:tcf2}, averaged over the first $300$ rollout frames. The DNS reference joint correlation at $r = 8$ is $0.832$. The (a) joint correlation overshoots the DNS value because the unregularised rollout collapses onto an over-correlated, near-laminar field with vanishing turbulent kinetic energy, not because it more faithfully reproduces the DNS pair structure (see Fig.~\ref{fig:tcf2}B,C).}
\label{tab:tcf-fieldstats}
\end{table*}

Panel~D of Fig.~\ref{fig:tcf2} reports the recurrent rollout snapshots in Lyapunov-time units (one step $= 0.01$ Lyapunov time). The QIML row uses the arm~(c) configuration above. The ground-truth, Koopman-only, FNO and MNO rollouts are taken from~\cite{wang2026qiml} and are re-plotted here for direct visual comparison; only the QIML rollout is computed in this work. The reported metric is the preservation of turbulent structure across rollout steps relative to collapse onto a static or low-rank field; quantitative rollout metrics for the baselines themselves are reported in~\cite{wang2026qiml}.

Hardware accounting (shots $= 2 \times 10^4$ per circuit, $\eta = 0.2$ set above the measured noise floor $0.032$ for this read-out target, raw counts and IQM job IDs archived) is consolidated in Section~\ref{sec:hardware}.

\section{ERA5 case study: data, model and training}
\label{sec:era5}

This case study instantiates the diagonal $k \leq 2$ members of the Q-Prior framework. The diagonal restriction confines the relevant Pauli expectations to $Z$-type observables (Section~\ref{sec:pauli-reduction}); the full-Pauli read-out separation of Theorem~\ref{thm:lower} therefore operates above the regime exercised here, and remains the operative bound for higher-$k$ extensions. The non-diagonal regime is instantiated by the turbulent channel-flow read-out (Section~\ref{sec:tcf}).

We use the 500\,hPa geopotential height field ($Z_{500}$) from the ECMWF ERA5 reanalysis~\cite{hersbach2020era5}, regridded to $1.5^\circ$ horizontal resolution ($121 \times 240$ grid). Training: 1979-01-01 to 2015-12-31; held-out test: 2016-01-01 to 2017-12-31. The time step is 24\,h. At each step we use a one-input one-output convention: the current $Z_{500}$ field predicts the field one 24\,h ahead; models are trained under this recurrent rule and evaluated by iterated rollouts to 240\,h (forecast performance, Appendix~\ref{sec:results}) and to 480\,h (long-rollout collapse diagnostics, Appendix~\ref{sec:results}). The recurrent rollout uses each model's own previous output as input from the second step onwards, deliberately stressing error accumulation relative to multi-horizon training objectives. Anomaly correlation (ACC) and root-mean-square error (RMSE) are computed against ERA5 ground truth, latitude-weighted by $\cos(\mathrm{lat})$, following standard WeatherBench benchmark conventions~\cite{rasp2020weatherbench}; the reference mean used to define anomalies is computed from the training period.

\subsection{Q-Prior generator}

The Q-Prior generator is the brick-wall circuit of Appendix~\ref{sec:numdemo}.1; the case-study configuration uses $q = 5$, $n_q = 10$, and $L = 5$, with $170$ trainable parameters. The training data are pairwise atmospheric samples drawn from the reanalysis time stream. Each sample is mapped through $U(\theta)$ as in the QIML protocol~\cite{wang2026qiml}, and optimisation updates $\theta$ so that empirical estimates of chosen marginals or moments of $\rho_{\theta}$ track the data. This is a sample-based fit of a quantum generator, not a maximum-likelihood reconstruction of the full multi-qubit outcome distribution.

The training loss is a combined $k \leq 2$ objective: the $L_2$ term matches the single-site ($k = 1$) marginals while the conditional Kullback--Leibler (KL) term matches the pairwise ($k = 2$) conditional marginals at given site pairs,
\begin{align}
\mathcal{L}_Q &= \lambda_{L_2} \|p_{\theta} - p_{\mathrm{target}}\|_2^2 \notag \\
&\quad + \lambda_{\mathrm{KL}}\, \mathrm{KL}\!\left(p_{\mathrm{target}}(b_i, b_j | i, j) \,\|\, p_{\theta}(b_i, b_j | i, j)\right),
\end{align}
so the same circuit instance is trained simultaneously on both diagonal orders exercised by the Koopman case study. The read-out demonstration of Section~\ref{sec:numdemo} trains a separate generator under a pairwise-only objective (Eq.~\eqref{eq:S4_loss}), which isolates the non-diagonal Pauli read-out from the diagonal forecast objective trained here.
The training and integration protocol follows the QIML reference implementation~\cite{wang2026qiml}, which we summarise here for self-containedness; we restate the parameters relevant for reproducibility of the present Koopman+Q-Prior pipeline. Optimisation uses Adam with limited-memory Broyden--Fletcher--Goldfarb--Shanno (L-BFGS) and Constrained Optimization BY Linear Approximation (COBYLA) as fallbacks for convergence stability. Quantum-circuit simulation uses CUDA-Q and PennyLane, with training run on NVIDIA H100 GPUs at the Leibniz Supercomputing Centre (LRZ) BEAST cluster. The Q-Prior stage is trained for 50 epochs with up to $2 \times 10^4$ measurement shots per epoch on hardware-backed runs; ERA5 simulator training substitutes exact analytical $k$-point marginals from $\rho_{\theta}$ for the same quantities estimated from finite shots, at the same iteration budget. Once trained, the circuit parameters $\theta$ fully specify the Q-Prior; no further quantum-hardware access is required during classical training or inference. The trained Q-Prior covariance $\Sigma_Q$ is computed from the diagonal projector marginals of $\rho_{\theta}^{(2)}$, equivalently from $Z$-type Pauli expectations.

\subsection{Koopman backbone and Q-Prior integration}

The classical autoregressive backbone is a Koopman model trained for 500 epochs with PyTorch--Adam, matching the Koopman-baseline configuration~\cite{wang2026qiml}. Given an atmospheric state $\mathbf{x}_t \in \mathbb{R}^d$, a learnable encoder $g: \mathbb{R}^d \to \mathbb{R}^D$ lifts the state to a latent representation $\mathbf{z}_t = g(\mathbf{x}_t)$ in which the dynamics are approximately linear: $\mathbf{z}_{t+1} = K \mathbf{z}_t$. The Koopman matrix $K \in \mathbb{R}^{D \times D}$ is parametrised as a unitary matrix via a Cayley map, which guarantees that all eigenvalues lie on the unit circle and prevents error amplification during long autoregressive rollouts. A learnable decoder $f: \mathbb{R}^D \to \mathbb{R}^d$ recovers the physical-space prediction $\widehat{\mathbf{x}}_{t+1} = f(\mathbf{z}_{t+1})$. We use $D = 4d$ with $d$ equal to the number of spatial grid points.

The Q-Prior enters the training loss as the covariance regulariser of Eq.~\eqref{eq:loss}, restated here for self-containedness:
\begin{equation}\label{eq:total_loss}
\mathcal{L}_{\mathrm{tot}} \;=\; \mathcal{L}_{\mathrm{rec}} + \lambda \big\| \widehat{\Sigma}_t - \Sigma_Q \big\|_F^2 \qquad \text{(restated from Eq.~\eqref{eq:loss}),}
\end{equation}
where $\widehat{\Sigma}_t$ is the empirical covariance of the rollout state computed over a sliding window of $W = 48$\,h ending at forecast time $t$, $\|\cdot\|_F$ is the Frobenius norm, and $\lambda$ is selected by validation-set ACC at 144\,h lead time using the empirically weighted composite-loss approach~\cite{wang2026qiml}. The reconstruction loss $\mathcal{L}_{\mathrm{rec}}$ is the per-pixel mean squared error between predicted and ground-truth $Z_{500}$ at the next 24\,h step.

Both covariance statistics in Eq.~\eqref{eq:loss} are evaluated at the two-point-functional level encoded by the $k = 2$ Q-Prior, not at the level of the full grid-point covariance matrix. $\widehat{\Sigma}_t$ is the empirical two-point covariance of the rollout $Z_{500}$ field at the fixed streamwise displacement $r = (0, 4)$, computed over the $W = 48$\,h sliding window ending at forecast time $t$ and averaged over the latitude band $|\mathrm{lat}| \leq 45^\circ$, matching the sampling protocol of the Q-Prior training data (Appendix~\ref{sec:numdemo}). $\Sigma_Q$ is the corresponding covariance of the trained prior, computed from the diagonal two-site marginal $P_\theta^{(2)}(b_A, b_B)$ of $\rho_\theta^{(2)}$ via the shared global bin centres. At the single displacement used in the case study both quantities are scalars and the Frobenius norm in Eq.~\eqref{eq:loss} reduces to a squared difference; the matrix notation is retained because the construction extends without modification to a displacement set and to the higher-order marginals of the $k \geq 3$ hierarchy. This dimensional reduction is a design choice, not an approximation of a full-field target: the regulariser constrains the invariant-measure functional that the $\mathcal{O}(10^2)$-parameter prior encodes, and enters the total loss at the calibrated numerical scale on which the unrescaled empirical covariance $\Sigma_{\mathrm{data}}$ does not (covariance-scale ablation below).

\paragraph*{Baselines.}
\textit{Koopman without Q-Prior:} same architecture as above with $\lambda = 0$.

\textit{Fourier neural operator (FNO) and adaptive Fourier neural operator (AFNO):} trained and evaluated under the same recurrent rollout protocol, with the same training/test split and per-step input-output convention. Architecture details follow the original references; parameter budgets are given in Table~\ref{tab:params}.

\paragraph*{Classical-covariance ablation.}
As a control, we replaced the Q-Prior covariance $\Sigma_Q$ in Eq.~\eqref{eq:loss} with the empirical covariance $\Sigma_{\mathrm{data}}$ of the training period at otherwise identical settings. At the differing numerical scale of $\Sigma_{\mathrm{data}}$, the regularisation term contributes a validation-time loss of about $2 \times 10^4$ against about $0.5$ for the Q-Prior target, which dominates the objective and prevents the run from converging within the shared hyperparameter window. The same configuration with the Q-Prior covariance converges to a skilful forecaster across multiple runs, so the Q-Prior covariance acts as a calibrated, drop-in regulariser that the unrescaled empirical covariance does not provide at identical hyperparameters. We retain the unregularised Koopman model as the controlled classical reference, with ACC $0.686$ at 48\,h. The scale-matched classical-covariance control has since been delivered at reduced training budget. The configuration keeps the architecture and training protocol fixed and replaces the Q-Prior target in Eq.~\eqref{eq:loss} with the empirical two-point statistic $T_{\mathrm{data}}$ evaluated in standardised coordinates at the same displacement $r = (0, 4)$ and window, with regulariser weights $\lambda \in \{1, 10, 100\}$ and a single seed per $\lambda$. Against its own reduced-budget paired baseline, ACC $0.516$, $0.340$, $0.155$, $0.070$, $0.007$ and $-0.020$ at 24, 48, 96, 144, 240 and 480\,h, the matched classical target likewise suppresses long-rollout collapse: at 480\,h the unregularised baseline falls below climatology while every regularised arm holds at or above zero, and at the largest $\lambda$ the long-lead ACC improves by $+0.06$ to $+0.07$ at 96 to 240\,h against a cost of about $-0.06$ at 24\,h, part of which traces to the long-lead model-selection criterion of validation ACC at 144\,h. The constraint residual $|T_{\mathrm{rollout}} - T_{\mathrm{data}}|$ falls monotonically with $\lambda$, from $0.137$ to $0.080$ to $0.028$, so the constraint binds and the long-lead gains track how tightly it binds. These reduced-budget control runs are comparable only with their own paired baseline, not with the full-budget results of Table~\ref{tab:z500_short}, since the training budgets differ. The outcome is the designed one. $\Sigma_Q$ is trained to match the empirical covariance to within about $1\%$ (Appendix~\ref{sec:numdemo}), so a classical pipeline granted the empirical statistic directly reproduces the downstream constraint benefit; the Q-Prior supplies the same constraint from an $\mathcal{O}(10^2)$-parameter compressed state whose \emph{post hoc} read-out extends beyond the diagonal sector at $\nq$-independent copy cost (Results~\ref{res:representation} and~\ref{res:readout}). A multi-$\lambda$, multi-seed control at full training budget remains future work. The quantum--classical separation established in this work is the \emph{post hoc} full-Pauli read-out of Result~\ref{res:readout}, demonstrated at $32$-qubit scale in Fig.~\ref{fig:measure}; this is a distinct object from the downstream covariance regulariser tested by the present ablation.

\section{ERA5 forecast performance and long-rollout diagnostics}
\label{sec:results}

This section reports the held-out 2016--2017 ERA5 test results for the case-study configuration of Appendix~\ref{sec:era5}: $Z_{500}$ forecast performance to 10 days, long-rollout collapse diagnostics, and a parameter-efficiency comparison against the operator-learning baselines.

\subsection{Z500 forecast performance to 10 days}
\label{sec:results-z500}

Table~\ref{tab:z500_short} summarises ACC and RMSE on the held-out 2016--2017 ERA5 test period, for quantum-informed machine learning (QIML; Koopman with the Q-Prior), the Koopman baseline without Q-Prior, FNO, and AFNO. At 24\,h the neural-operator baselines lead (QIML within $0.06$ ACC of AFNO); QIML leads at every lead time $\geq 48$\,h. From 48\,h to 240\,h, QIML attains the highest ACC at every reported lead time: $\mathrm{ACC} = 0.756$ at 48\,h vs. $0.686$ for the Koopman baseline ($+10\%$) with a $13\%$ RMSE reduction (542 vs.\ 620\,m); $+31\%$ ACC at 120\,h; $+39\%$ ACC at 240\,h ($0.114$ vs.\ $0.082$). Averaged over 48--240\,h, the Q-Prior reduces the Koopman $Z_{500}$ RMSE to $92\%$ of its baseline value.

\begin{table*}[!t]
\centering
\small
\setlength{\tabcolsep}{4pt}
\caption{$Z_{500}$ forecast metrics on ERA5 (2016--2017). ACC denotes anomaly correlation; RMSE is in geopotential metres. QIML = quantum-informed machine learning (Koopman autoregressive model augmented with the Q-Prior). ``Best baseline'' identifies the stronger of the two neural-operator baselines (FNO, AFNO) at each lead time.}
\label{tab:z500_short}
\begin{tabular}{cccccccc}
\toprule
\textbf{Lead} & \multicolumn{4}{c}{\textbf{ACC}} & \multicolumn{3}{c}{\textbf{RMSE [m]}} \\
\cmidrule(lr){2-5} \cmidrule(lr){6-8}
\textbf{(h)} & \textbf{QIML} & \textbf{Koopman} & \textbf{FNO} & \textbf{AFNO} & \textbf{QIML} & \textbf{Koopman} & \textbf{Best baseline} \\
\midrule
24  & 0.915 & 0.920 & 0.934 & 0.972 & 327  & 318  & 191 (AFNO) \\
48  & 0.756 & 0.686 & 0.700 & 0.714 & 542  & 620  & 605 (AFNO) \\
72  & 0.558 & 0.458 & 0.470 & 0.469 & 724  & 814  & 812 (FNO) \\
120 & 0.323 & 0.247 & 0.252 & 0.252 & 891  & 961  & 966 (FNO) \\
168 & 0.201 & 0.146 & 0.148 & 0.149 & 965  & 1025 & 1032 (FNO) \\
240 & 0.114 & 0.082 & 0.082 & 0.082 & 1017 & 1067 & 1076 (FNO) \\
\bottomrule
\end{tabular}
\end{table*}

Two factors explain this pattern. First, at short lead times the forecast is still close to the initial condition and error is dominated by local high-frequency features, a regime in which expressive spectral predictors such as FNO and AFNO are well suited; the Q-Prior targets the complementary regime and does not modify the single-step transition operator. Second, as the rollout extends, error dynamics become dominated by autoregressive drift and mismatch with the atmosphere's invariant statistics; the Q-Prior acts on this regime by adding an invariant-statistics constraint to the loss, which is consistent with the ACC and RMSE gains observed in the 48--240\,h window.

\subsection{Long-rollout collapse diagnostics}
\label{sec:longrange}

Table~\ref{tab:collapse} summarises long-rollout diagnostics for QIML and the three classical baselines on lead times 312--480\,h. Beyond $\sim$10 days, instantaneous forecast fields lose useful pointwise correlation with ERA5: anomaly correlation approaches zero and RMSE approaches the level achieved by time-averaged baselines (time-independent statistics estimated from the training era). In this saturated-error regime every model resembles a low-order climate projection rather than a day-by-day trajectory; the relevant diagnostic is whether the forecast field continues to vary coherently with lead time.

\begin{table*}[!t]
\centering
\footnotesize
\setlength{\tabcolsep}{5pt}
\caption{Long-rollout collapse diagnostics on ERA5 2016 to 2017, lead times 312 to 480\,h. ``$\Delta\mathrm{RMSE}$'' is the relative change in RMSE between 312\,h and 480\,h; ``ACC plateau'' is the asymptotic anomaly correlation in the same window. The primary fidelity column is the radially-averaged spectral-$L_2$ distance to the ERA5 climatology over the same window ($48$ initial times); smaller is better. The Wasserstein-1 distance to the climatology is reported for QIML; for the operator-learning baselines (FNO, AFNO) the late-lead-time field is a frozen single-step extrapolation, so the W$_1$ statistic on those fields is not comparable to the time-varying QIML rollout under the same metric and is not reported. The Koopman row is the plain 24\,h Koopman baseline (the classical reference of Table~\ref{tab:z500_short}), not a $1{\times}4$ autoregressive rollout. Small $\Delta\mathrm{RMSE}$ together with low ACC indicates collapse to a static long-term mean field; large spectral-$L_2$ indicates departure from the ERA5 climatology.}
\label{tab:collapse}
\begin{tabular}{@{}lccccc@{}}
\toprule
\textbf{Model} & \textbf{$\bm{\Delta\mathrm{RMSE}}$} & \textbf{ACC plateau} & \textbf{Spectral-$L_2$ $\downarrow$} & \textbf{Wasserstein-1 $\downarrow$} & \textbf{Field evolution} \\
\midrule
QIML (this work) & $+5.1\%$ & $0.01$ & $\mathbf{0.0019}$ & $\mathbf{443}$ & time-varying mid-latitude waves \\
Koopman (no prior)        & $< 0.6\%$ & $< 0.07$ & $0.054$ & n/a & frozen near-zonal field \\
FNO                       & $< 0.6\%$ & $< 0.07$ & $0.069$ & n/a & frozen near-zonal field \\
AFNO                      & $< 0.6\%$ & $< 0.07$ & $0.0027$ & n/a & frozen near-zonal field \\
\bottomrule
\end{tabular}
\end{table*}

From 240\,h onwards, all three classical baselines freeze into an essentially static, smoothed near-zonal field that no longer evolves with lead time: their RMSE varies by less than $0.6\%$ over the 312 to 480\,h window and ACC plateaus below $0.07$, indicating mode collapse onto a single time-independent image. Across 312 to 480\,h, QIML achieves the smallest spectral-$L_2$ distance to the ERA5 climatology ($0.0019$), well below AFNO ($0.0027$), Koopman ($0.054$) and FNO ($0.069$); on the time-varying QIML rollout the Wasserstein-1 distance to the climatology is $443$, while for the operator-learning baselines the late-lead-time field is a frozen single-step extrapolation and the W$_1$ comparison is not meaningful under the same metric. QIML continues to evolve, with RMSE drifting smoothly from $1046$\,m at 312\,h to $1099$\,m at 480\,h while the predicted field retains mid-latitude wave structure at each lead. The Q-Prior is calibrated to the regime in which deterministic pointwise predictability has already decayed: rather than aiming to extend the predictability limit, it stabilises long-range rollout structure that otherwise terminates in mode collapse onto a static mean field. Consistent with the invariant-measure interpretation of the Q-Prior introduced in the main text, the long-range stability comes from the Stage~2 extraction making invariant-measure constraints repeatedly available throughout the rollout. In this saturated regime all models have lost pointwise skill (ACC $\to 0$); the relevant criterion is statistical fidelity to the invariant measure and retained temporal variability.

\subsection{Parameter efficiency}

The Q-Prior used in the case study is specified by $\mathcal{O}(10^2)$ trainable circuit parameters while supplying a $k$-point statistical constraint whose explicit classical tabulation would require $2^{kq}-1$ independent probabilities over $2^{kq}$ joint bin-outcomes (Appendix~\ref{sec:setup}). Table~\ref{tab:params} lists the parameter budgets: QIML achieves the best performance in the 48--240\,h window with a Q-Prior that has three to four orders of magnitude fewer trainable parameters than the neural-operator baselines.

\begin{table*}[!t]
\centering
\caption{Parameter budget of the compared models. ``Quantum parameters'' refers to the trainable angles of the parametrised quantum circuit defining the Q-Prior; classical components use single-precision weights.}
\label{tab:params}
\begin{tabular}{@{}lccc@{}}
\toprule
\textbf{Model} & \textbf{Classical params} & \textbf{Quantum params} & \textbf{Qubits} \\
\midrule
QIML (Koopman + Q-Prior) & $\sim 7 \times 10^5$ & $\sim 200$ & $10$--$15$ \\
Koopman (no prior)       & $\sim 7 \times 10^5$ & n/a      & n/a \\
FNO                      & $\sim 3.6 \times 10^5$ & n/a      & n/a \\
AFNO                     & $\sim 2.4 \times 10^6$ & n/a      & n/a \\
\bottomrule
\end{tabular}
\end{table*}

\section{Hardware demonstration of the Bell read-out}
\label{sec:hardware}

This section reports the IQM Garnet and Emerald executions of the \emph{post hoc} Bell read-out introduced in Section~\ref{sec:upper} and used in Section~\ref{sec:numdemo}. The execution is post hoc: raw shot counts and IQM job metadata are archived, so any Pauli, any choice of $\eta$, and any offline error-mitigation strategy can be re-derived from the same hardware records without further device access. This section reports the as-measured raw counts; the archive supports offline error-mitigation strategies as a separate analytic step. All counts are copies/shots, not wall-clock time.

\subsection{Devices and circuit compilation}

The two-copy Bell read-out is executed on 20-qubit Garnet for $n_q \leq 10$ (two-copy register up to 20 qubits) and 54-qubit Emerald for $n_q \in \{12, 14, 16\}$ (two-copy register up to 32 qubits). Generator and Bell-rotation circuits are compiled through Qiskit (SABRE layout, optimisation level 3) with the IQM transpiler emitting native CZ entanglers; transpiled circuit depths and native CZ counts grow from $116/132$ at $n_q = 4$ to $556/689$ at $n_q = 16$. Each Bell record consists of $2 \times 10^4$ shots per circuit; raw counts, IQM job IDs, and run timestamps are archived alongside the Bell-record metadata. The IQM backend API did not expose run-time calibration properties at the run date, so cited fidelities follow IQM's published device specifications.

\subsection{Noise characterisation and the \texorpdfstring{$\eta$}{eta} choice}

Hardware deviations are dominated by systematic device bias, not by sampling noise. At $2 \times 10^4$ shots the per-estimator statistical uncertainty is $\approx 7 \times 10^{-3}$, two orders of magnitude below the observed deviations from the noiseless values. We characterise the device noise floor at $n_q = 6$ on Garnet: across the eight test Paulis the median $|\text{hardware} - \text{noiseless}|$ is $0.104$. For the turbulent channel-flow directional-coherence read-out of Section~\ref{sec:tcf}, the corresponding floor measured across streamwise separations $r$ is the lower value $0.032$ (small-coherence values are thus noise-floor-limited there, while large-coherence values are recovered within $\eta$; Fig.~\ref{fig:tcf1}D). The Section~\ref{sec:tcf} hardware records also use $2 \times 10^4$ shots per circuit and are archived with their raw counts and IQM job IDs alongside the ERA5 records. We set $\eta = 0.2$ above both measured noise floors: the ERA5 read-out floor of $0.104$ ($\eta/\text{floor} \approx 1.9$) and the turbulent-flow read-out floor of $0.032$ ($\eta/\text{floor} \approx 6.3$). Observables whose noiseless magnitude is below $\eta$ are not included in the faithful-recovery claim. The deviations are consistent with depolarising-like suppression accumulated over the transpiled depth, growing with depth and Pauli weight, so the deeper circuits at larger $n_q$ in this sweep (depths up to $556$ at $n_q = 16$) sit above the $n_q = 6$ Garnet floor used to set $\eta$; the largest-magnitude test observable on the $n_q = 6$ Garnet set (cross-block $X_2 X_5$, noiseless value $0.448$) is reduced to within the noise floor on hardware. Fig.~2D should be read jointly with the fidelity scatter of Fig.~2B: the flat $M_Q$ scaling is a copy-complexity statement that holds for the bounded Bell estimator irrespective of noise bias; observables with noiseless magnitude small compared with $\eta$ lie within the $\eta$ band both intrinsically and after the depolarising contraction. Faithful recovery of large-magnitude high-weight observables on present hardware requires error mitigation; because the Bell record is post hoc, weight-resolved depolarising rescaling can be applied offline to the archived counts.

\paragraph*{Offline mitigation.}
Weight-resolved depolarising rescaling applied to the archived Bell records reduces the median deviation from the noiseless values across $n_q$: $0.078 \to 0.058$ at $n_q = 4$, $0.104 \to 0.072$ at $n_q = 6$, and $0.029 \to 0.023$ at $n_q = 8$. The mitigation restores cross-block observables that retain non-zero raw signal; full recovery of the highest-weight observables is left to dedicated error-mitigated runs.

\subsection{Resource budget and outlook}

The generator and Bell read-out together apply $n_q$ cross-register Bell rotations to two copies of the trained Q-Prior state. For the weather-relevant range $k = 2$--$3$ and $q = 5$, the total physical qubit count is $2 n_q \in \{20, 30\}$ and the per-shot two-qubit-gate count is $N_{\mathrm{2Q}} \in [30, 75]$. A digital depolarising-error model with per-gate error $\varepsilon_{\mathrm{2Q}}$ gives a per-circuit success probability $F \approx (1 - \varepsilon_{\mathrm{2Q}})^{N_{\mathrm{2Q}}}$. Requiring $F \geq 0.9$ for unbiased estimation of $[\mathrm{tr}(P\rho)]^2$ translates, for $N_{\mathrm{2Q}} \approx 30$, into $\varepsilon_{\mathrm{2Q}} \lesssim 3 \times 10^{-3}$ (two-qubit fidelity $\gtrsim 99.7\%$); for $N_{\mathrm{2Q}} \approx 75$, into $\varepsilon_{\mathrm{2Q}} \lesssim 1.3 \times 10^{-3}$ ($\gtrsim 99.87\%$). The Garnet and Emerald runs reported above use device two-qubit fidelities below this clean-extraction target and run with depolarising bias accordingly. Closing the bias on present hardware therefore requires either an order-of-magnitude reduction in transpiled depth or standard error mitigation (zero-noise extrapolation, measurement-error mitigation, probabilistic error cancellation); the latter applies directly to the post hoc Bell record without re-running the device. Table~\ref{tab:hardware} compares representative near-term platforms against this resource target.

\begin{table*}[ht]
\centering
\scriptsize
\setlength{\tabcolsep}{4pt}
\renewcommand{\arraystretch}{1.05}
\caption{Representative near-term platforms versus Stage~2 resource targets ($n_q = 4$--$16$ measured; $N_{\mathrm{2Q}} \approx 30$--$700$). The ``Coherence'' column records vendor statements where available. The ``Status'' column summarises each platform's current relation to the Stage~2 two-qubit fidelity target ($\sim 99.9\%$): \emph{read-out demonstrated} for the hardware runs reported here, with comparison platforms classed as \emph{approaching}, \emph{close}, \emph{below target}, or \emph{roadmap} (italic row, not an independent product review). Each platform row cites the corresponding vendor documentation or press materials.}
\label{tab:hardware}
\begin{tabular}{@{}lccccc@{}}
\toprule
\textbf{Platform} & \textbf{Qubits} & \textbf{2Q fid.} & \textbf{Coherence} & \textbf{Topo.} & \textbf{Status} \\
\midrule
This work (Garnet, Emerald) & $20$--$32$ & target $\geq 99.9\%$ & $T_2^{\mathrm{echo}} \sim 1$\,ms & pair + cross-register & read-out demonstrated \\
\midrule
IBM Heron~\cite{IBMHeron2025} & $156$ & $\sim 99.5\%$ & $T_2 \sim 10$--$10^2\,\mu$s & heavy-hex & approaching \\
IBM Nighthawk~\cite{IBMNighthawk2026} & $120$ & improving & $T_1 \approx 350\,\mu$s & square & approaching \\
IQM Radiance~\cite{IQMRadiance2025,IQMmilestone2024} & $54$ & $\geq 99.3\%$ & $T_2^{\mathrm{echo}} \approx 1.2$\,ms & square & below target \\
\textit{IQM Halocene}~\cite{IQMHalocene2025} & \textit{150} & \textit{target} & \textit{n/a} & \textit{square} & \textit{roadmap} \\
Quantinuum H2~\cite{Quantinuum2024,QuantinuumFAQ2024} & $56$ & $\sim 99.87\%$ & $T_1 > 1$\,min; $T_2 \approx 4$\,s & all-to-all & close \\
Quantinuum H1~\cite{QuantinuumFAQ2024,Quantinuum3nines2024} & $20$ & $99.914\%$ & $T_1 > 1$\,min; $T_2 \approx 4$\,s & all-to-all & close \\
\bottomrule
\end{tabular}
\end{table*}

The Bell read-out primitive is a fixed shallow circuit applied after generator preparation, and the read-out separation transfers with the same copy-complexity scaling to fault-tolerant logical-qubit Q-Priors; the protocol introduces no T-state demand beyond that of the generator circuit. The Q-Prior generator is trained as a variational quantum algorithm (VQA) and inherits the standard VQA training cost-landscape, while the Bell read-out, with its shallow $L+1$ entangling-layer depth, is amenable to the error-mitigation techniques referenced above.

\section{Datasets and code}
\label{sec:data-code}

The ERA5 reanalysis used in the case study (Section~\ref{sec:era5}) and the read-out experiments (Section~\ref{sec:numdemo}) is publicly available from the Copernicus Climate Data Store at \url{https://cds.climate.copernicus.eu/datasets/reanalysis-era5-pressure-levels}. The main theoretical results of this work do not involve code. The case-study and read-out implementations, including the CUDA-Q prototype, the IQM runner, the ERA5 cache used for paired-sample training, and the archived raw shot counts and job IDs from the hardware runs, are available at \url{https://github.com/UCL-CCS/Weather_QIML.git}.

\bibliography{paper}

\end{document}